\renewcommand\footnotetextcopyrightpermission[1]{} 
\newcommand{\descr}[1]{\vspace{0.1cm}\noindent\textit{#1}}
\newtheorem{definition}{Definition}
\newtheorem{theorem}{Theorem}
\newtheorem{lemma}{Lemma}
\def\BibTeX{{\rm B\kern-.05em{\sc i\kern-.025em b}\kern-.08emT\kern-.1667em\lower.7ex\hbox{E}\kern-.125emX}}
\newcommand{\mbf}[1]{{\mathbf{#1}}}
\newcommand{\Adv}{\mathsf{Adv}}
\newcommand{\Tcl}{T_{\mathsf{cl}}}
\newcommand{\Tgd}{T_{\mathsf{gd}}}
\newcommand{\E}[1]{\mathrm{E}\left[ #1 \right]}
\newcommand{\Var}[1]{\mathrm{Var}\left[ #1 \right]}
\newcommand{\Prob}[1]{\mathrm{P}\left[ #1 \right]}
\setlist[itemize]{leftmargin=*}
\begin{document}

\title{Federated Learning in Adversarial Settings}

\author{Raouf Kerkouche}
\email{raouf.kerkouche@inria.fr}
\affiliation{%
  \institution{Privatics team, Univ. Grenoble Alpes, Inria, 38000 Grenoble, France}
}

\author{Gergely \'Acs}
\email{acs@crysys.hu}
\affiliation{%
  \institution{Crysys Lab, BME-HIT}
}

\author{Claude Castelluccia}
\email{claude.castelluccia@inria.fr}
\affiliation{%
  \institution{Privatics team, Univ. Grenoble Alpes, Inria,
 38000 Grenoble, France}
}


\begin{abstract}
  Federated Learning enables entities to collaboratively learn a shared prediction model while keeping their training data locally. It prevents data collection and aggregation and, therefore, mitigates the associated privacy risks. However, it still remains vulnerable to various security attacks where malicious participants aim at degrading the generated model, inserting backdoors, or inferring other participants' training data. This paper presents a new federated learning scheme that provides different trade-offs between robustness, privacy, bandwidth efficiency, and model accuracy. Our scheme uses biased quantization of model updates and hence is bandwidth efficient. It is also robust against state-of-the-art backdoor as well as model degradation attacks even when a large proportion of the participant nodes are malicious. We propose a practical differentially private extension of this scheme which protects the whole dataset of participating entities. We show that this extension performs as efficiently as the non-private but robust scheme, even with stringent privacy requirements but are less robust against model degradation and backdoor attacks. This suggests a possible fundamental trade-off between Differential Privacy and robustness. 
\end{abstract}



\keywords{Differential Privacy, Privacy-preserving, Security, Federated learning, robustness, bandwidth efficient.}


\maketitle

\pagestyle{plain} 

\section{Introduction}
In standard centralized training, a machine learning model is generated by a single server who collects the training data from different sources such as mobile devices, sensors, or organizations. 
However, data owners are often reluctant to share their potentially sensitive data with an untrusted server.
To overcome this shortcoming, \emph{Collaborative Learning}  allows several parties (clients)  to build a common model  without sharing their private training data.   
It proposes to distribute and run the Machine Learning algorithms on the entities that own the data instead of a central server. Data owners periodically synchronize their local models either distributively or through a central, perhaps untrusted server. For example, in Federated Learning, clients send their model updates to the central server which then summarizes the weights into a common model and returns this model to the clients for another round. This protocol repeats until the model converges. Federated Learning has been gaining popularity and considered to train shared models for many applications such as input text prediction, ad selection\footnote{\url{https://blog.chromium.org/2019/08/potential-uses-for-privacy-sandbox.html}}, drug discovery\footnote{\url{https://www.melloddy.eu}}, or various medical applications \cite{choudhury2019differential} over the confidential data of many different entities.


Although Collaborative learning creates new opportunities, it also has a few drawbacks. First, it is \emph{not robust} against misbehaving parties who may not follow the learning protocol faithfully in order to degrade the model performance. For example, a malicious party may send bogus model updates for aggregation which degrades the overall model quality or introduces backdoors \cite{IBM_attack}. Second, malicious parties can potentially \emph{extract private information} about the training data of honest parties from their model updates  or the common model \cite{Property_inference,NasrSH19}. Third, the \emph{bandwidth requirement} of Federated Learning can be significant for large models: each update is typically composed of $32 \cdot n$ bits, where $n$ is the number of model parameters. Since it is not unusual to have models with  thousands or even millions of parameters, the size of each update can be quite large.





The motivation of this paper is to propose a new Federated Learning scheme that mitigates these issues, and is (1) \emph{bandwidth efficient}, (2) \emph{privacy-preserving}, (3) \emph{secure}, and (4) \emph{accurate}. 
We identify several trade-offs between these design goals suggesting that satisfying all these requirements simultaneously is inherently difficult. 
More
specifically, we make the following contributions:
\begin{itemize}
    \item[-] We adapt the signSGD learning algorithm \cite{SIGNSGD, SIGNSGD_vote_robustness} to the federated learning setting which sends only a single bit per model parameter for aggregation instead of their actual value. This extreme quantization reduces the required updates' bandwidth by a factor of 32, while still providing similar performance to the standard centralized federated learning approach. As only a small random subset of participants send their updates at each federated round, our proposal is not only more bandwith efficient but also provides stronger robustness and privacy guarantees than the standard signSGD algorithm (see Section \ref{sec:related_work} for a detailed comparison).    
 
    \item[-] We propose a privacy-preserving federated signSGD extension which provides client-level Differential Privacy (DP).  Specifically, it hides any information that is unique to a client's training data, regardless whether it is about a single or multiple records, but still allows learning about characteristics that are common among multiple clients' training data.
    We show that our DP learning protocol, whose convergence rate is also computed analytically, produces models with an accuracy comparable to the non-private federated case, even with stringent privacy guarantees (e.g., $\varepsilon=1$). 
    
    \item[-] In order to diminish the communication costs of our DP algorithm, we propose a novel discretized and distributed version of the Gaussian Mechanism. In particular, as opposed to the standard Gaussian Mechanism \cite{Dwork2014book}, the noise values come from a discretized domain and are tightly concentrated around its mean depending on the desired privacy guarantee $\varepsilon$.  
    As a result, these values can be encoded with fewer bits than if they came from a continuous Gaussian distribution.   

    \item[-] We experimentally evaluate the robustness of our schemes by implementing and testing several State-of-the-Art security attacks such as model degradation (where the adversary aims at modifying the global model) or backdoor inclusion attacks (where the adversary aims at inserting hidden backdoors). We show that, due to the quantization of model updates, the non-private federated signSGD protocol, called SignFed, is more resilient  to these attacks than the standard Federated Learning scheme. However, its differentially private variant turns out to be more vulnerable to the security attacks. Indeed, the attacks are inherently concealed by the noise which is introduced to guarantee Differential Privacy.
    
\end{itemize}

\textbf{Organization:} The paper is structured as follows. Section \ref{sec:backg} details the preliminaries including the basic federated learning algorithm (StdFed) and Differential Privacy (DP). Section \ref{sec:fl_sign} describes SignFed which is an adaption of signSGD \cite{SIGNSGD_vote_robustness} to the federated learning setting. The performance of SignFed and StdFed are compared in Section \ref{sec:fl_sign_per}. A DP learning protocol for client-level privacy (DP-SignFed) is presented in Section \ref{sec:priv_fl_sign}, whose performance are evaluated in Section \ref{sec:eval}. The resistance of our protocols against various security attacks is studied in Section \ref{sec:security}. Section \ref{sec:related_work} compares our proposal with prior work, and finally Section \ref{sec:discussion} provides a summary and additional discussions about the proposed algorithms.

\section{Background}
\label{sec:backg}
\subsection{Federated Learning (StdFed)}
\label{StdFed}
\begin{algorithm}[!t]
\small
		\caption{StdFed: Federated Learning \label{alg:fed_learn}}
	\DontPrintSemicolon
	{\bf Server:}\;
	\Indp Initialize common model $w_0$\;
	\For {$t=1$ \KwTo $\Tcl$}
	{
	    Select $\mathbb{K}$ clients uniformly at random \;
		\For {\textrm{each} client $k$ \textrm{in} $\mathbb{K}$}
		{	
			$\Delta \mbf{w}_t^k = \mathbf{Client}_k(\mbf{w}_{t-1})$\;
		}
		$\mbf{w}_{t} = \mbf{w}_{t-1} + \sum_{k} \frac{|D_k|}{\sum_j |D_j|} \Delta \mbf{w}_{t}^{k}$\;
	}
	\KwOut{Global model $\mbf{w}_t$}\;
	\Indm {\bf $\mathbf{Client}_{k}(\mbf{w}_{t-1}^k)$:}\;
	\Indp
	$\mbf{w}_{t}^k = \mathbf{SGD}(D_k, \mbf{w}_{t-1}^k, \Tgd)$\;
	\KwOut{Model update $(\mbf{w}_{t}^k- \mbf{w}_{t-1}^k)$} 
\end{algorithm}

\begin{algorithm}[h]
\small
	\caption{Stochastic Gradient Descent \label{alg:sgd}}
	\DontPrintSemicolon
	\KwIn{$D$ : training data, $\Tgd$ : local epochs, $\mathbf{w}$ : weights}  
    \For {$t=1$ \KwTo $\Tgd$}
	{
	    Select batch $\mathbb{B}$ from $D$ randomly\;
	    $\mbf{w} = \mbf{w} - \eta \nabla f(\mathbb{B}; \mbf{w})$\;
	}
    \KwOut{Model $\mbf{w}$} 
\end{algorithm}

In federated learning \cite{ShokriS15,FedAVG}, multiple parties (clients) build a common machine learning model on the union of their training data without sharing them with each other. At each round of the training, some clients retrieve the global model from the parameter server, update the global model based on their own training data, and send back their updated model to the server. The server aggregates the updated models of all clients to obtain a global model that is re-distributed to some selected parties in the next round.  

In particular, a subset $\mathbb{K}$ of all $N$ clients are randomly selected at each round to update the global model, and $C = |\mathbb{K}| / N$ denotes the fraction of selected clients. At round $t$, a selected client $k \in \mathbb{K}$ executes $\Tgd$ local gradient descent iterations on the common model $\mbf{w}_{t-1}$ using its own training data $D_k$ ($D = \cup_{k\in \mathbb{K}} D_k$), and obtains the updated model $\mbf{w}_{t}^k$, where the number of weights is denoted by $n$ (i.e., $|\mbf{w}_t^{k}| = |\Delta \mbf{w}_t^k| = n$ for all $k$ and $t$). Each client $k$ submits the update $\Delta \mbf{w}_{t}^k = \mbf{w}_{t}^k - \mbf{w}_{t-1}^k$ to the server, which then updates the common model as follows: $\mbf{w}_{t} = \mbf{w}_{t-1} + \sum_{k \in \mathbb{K}} \frac{|D_k|}{\sum_j |D_j|} \Delta \mbf{w}_{t}^k$, where $|D_k|$ is known to the server for all $k$ (a client's update is weighted with the size of its training data). 
The server stops training after a fixed number of rounds $\Tcl$, or when the performance of the common model does not improve on a held-out data. 

Note that each $D_k$ may be generated from different distributions (i.e., Non-IID case), that is, any client's local dataset may not be representative of the population distribution \cite{FedAVG}. This can happen, for example, when not all output classes are represented in every client's training data. 
The federated learning of neural networks is summarized in Alg.~\ref{alg:fed_learn}. In the sequel, each client is assumed to use the same model architecture.


The motivation of federated learning is three-fold: first, it aims to provide confidentiality of each participant's training data by sharing only model updates instead of potentially sensitive training data. Second, in order to decrease communication costs, clients can perform multiple local SGD iterations before sending their update back to the server. 
Third, in each round, only a few clients are required to perform local training of the common model, which not only further diminishes communications costs  but also increases robustness against temporary client failures and hence makes the approach especially appealing with large number of clients.

However, several prior works have demonstrated that model updates do leak potentially sensitive information \cite{NasrSH19,Property_inference}. Hence, simply not sharing training data \emph{per se} is not enough to guarantee their confidentiality.

\begin{algorithm}[!t]
\small
		\caption{SignFed: Sign Federated Learning  \label{alg:sign_fed_learn}}
	\DontPrintSemicolon
	{\bf Server:}\;
	\Indp Initialize common model $w_0$\;
	\For {$t=1$ \KwTo $\Tcl$}
	{
	    Select $\mathbb{K}$ clients uniformly at random \;
		\For {each client $k$ \textrm{in} $\mathbb{K}$}
		{	
			$\mbf{s}_t^k = \mathbf{Client}_k(\mbf{w}_{t-1})$\;
		}
		$\mbf{w}_{t} = \mbf{w}_{t-1} + \gamma \mathsf{sign}\left(\sum_{k} \mbf{s}_{t}^{k}\right)$\;
	}
	\KwOut{Global model $\mbf{w}_t$}\;
	\Indm {\bf $\mathbf{Client}_{k}(\mbf{w}_{t-1}^i)$:}\;
	\Indp
	$\mbf{w}_{t}^k= \mathbf{SGD}(D_k, \mbf{w}_{t-1}^k, \Tgd)$\;
    \KwOut{Model update $\mathsf{sign}(\mbf{w}_{t}^k - \mbf{w}_{t-1}^k)$} 
\end{algorithm}

\subsection{Differential Privacy}
\label{sec:DP}
Differential privacy allows a party to privately release information about a dataset:  a function of an input dataset is perturbed, so that any information which can differentiate a record from the rest of the dataset is bounded~\cite{Dwork2014book}.

 \begin{definition}[Privacy loss]
 Let $\mathcal{A}$ be a privacy mechanism which assigns a value $\mathit{Range}(\mathcal{A})$ to a dataset $D$. The privacy loss of $\mathcal{A}$ with datasets $D$ and $D'$ at output $O \in \mathit{Range}(\mathcal{A})$ is a random variable $\mathcal{P}(\mathcal{A},D,D',O) = \log\frac{\Pr[\mathcal{A}(D) = O]}{\Pr[\mathcal{A}(D') = O]}$ 
 where the probability is taken on the randomness of $\mathcal{A}$.
 \label{def:ploss}
 \end{definition}

\begin{definition}[$(\epsilon,\delta)$-Differential Privacy~\cite{Dwork2014book}] 
A privacy mechanism $\mathcal{A}$ guarantees $(\varepsilon, \delta)$-differential privacy if for any database $D$ and $D'$, differing on at most one record, $\Pr_{O \sim \mathcal{A}(D)}[\mathcal{P}(\mathcal{A},D,D',O) > \varepsilon] \leq \delta$. 

\label{def:DP}
\end{definition}

Intuitively, this guarantees that an adversary, provided with the output of $\mathcal{A}$, can draw almost the same conclusions (up to $\varepsilon$ with probability larger than $1 - \delta$) about any record no matter if it is included in the input of $\mathcal{A}$ or not~\cite{Dwork2014book}. That is, for any record owner, a privacy breach is unlikely to be due to its participation in the dataset. 

\descr{Moments Accountant.} Differential privacy maintains composition; the privacy guarantee of the  $k$-fold adaptive composition  of $\mathcal{A}_{1:k} = \mathcal{A}_1, \ldots, \mathcal{A}_k$ can be computed using the moments accountant method \cite{Abadi}. In particular, it follows from Markov's inequality that $\Pr[\mathcal{P}(\mathcal{A},D,D',O) \geq \varepsilon] \leq \mathbb{E}[\exp(\lambda \mathcal{P}(\mathcal{A},D,D',O))]/\exp(\lambda\varepsilon)$ for any output $O \in \mathit{Range}(\mathcal{A})$ and $\lambda > 0$. This implies that $\mathcal{A}$ is $(\varepsilon, \delta)$-DP 
with $\delta = \min_{\lambda} \exp(\alpha_{\mathcal{A}}(\lambda) - \lambda \varepsilon)$, where $\alpha_{\mathcal{A}}(\lambda) = \max_{D,D'} \log\mathbb{E}_{O\sim \mathcal{A}(D)}[\exp(\lambda \mathcal{P}(\mathcal{A},D,D',O))]$ is the log of the moment generating function of the privacy loss. The privacy guarantee of the composite mechanism $\mathcal{A}_{1:k}$ can be computed using that $\alpha_{\mathcal{A}_{1:k}}(\lambda) \leq \sum_{i=1}^k \alpha_{\mathcal{A}_{i}}(\lambda)$ \cite{Abadi}. \smallskip 

\descr{Gaussian Mechanism.} There are a few ways to achieve DP, including the Gaussian mechanism~\cite{Dwork2014book}. A fundamental concept of all of them is the \emph{global sensitivity} of a function~\cite{Dwork2014book}.
\begin{definition}[Global $L_p$-sensitivity] 
For any function $f:\mathcal{D} \rightarrow \mathbb{R}^ n$, the $L_p$-sensitivity of $f$ is
$\Delta_p f = \max_{D, D'} || f(D)-f(D') ||_p$, 
for all $D, D'$ differing in at most one record, where $||\cdot||_p$ denotes the $L_p$-norm.\vspace*{-0.15cm}
\label{def:global_sens}
\end{definition}
\smallskip
The Gaussian Mechanism~\cite{Dwork2014book} 
consists of adding Gaussian noise to the true output of a function.
In particular, for any function $f:\mathcal{D} \rightarrow \mathbb{R}^n$, the Gaussian mechanism is defined as adding i.i.d Gaussian noise with variance $(\Delta_2 f \cdot \sigma)^2$  and zero mean to each coordinate value of  $f(D)$. Recall that the pdf of the Gaussian distribution with mean $\mu$ and variance $\xi^2$ is
\begin{align}
\label{eq:g_pdf}
\mathsf{pdf}_{\mathcal{G}(\mu, \xi)}(x) = \frac{1}{\sqrt{2\pi}\xi} \exp\left(-\frac{(x-\mu)^2}{2 \xi^2}\right) 
\end{align}

In fact, the Gaussian mechanism draws vector values from a multivariate spherical (or isotropic) Gaussian distribution
which is described by random variable $\mathcal{G}(f(D), \Delta_2 f \cdot \sigma\mathbf{I}_n)$, where $n$ is omitted if its unambiguous in the given context.

\section{SignFed: SIGN Protocol in the Federated Learning Setting}
\label{sec:fl_sign}


\subsection{The SignFed Protocol}

In the StdFed scheme, presented in Section~\ref{StdFed}, each selected client sends its updated model to the central server. As discussed previously, this scheme has several drawbacks in terms of bandwidth, robustness and privacy. We propose to limit these drawbacks by quantizing the model weights as in \cite{SIGNSGD_vote_robustness}. More specifically, in the new scheme, referred to as SignFed in the rest of this paper, each client sends only the sign of every coordinate value in its parameter update vector. 
The server takes the sign of the sum of signs per coordinate and scales down the result with a fixed constant $\gamma$ (which is in the order of $10^{-3}$ in practice) in order to limit the contribution of each client and adjust convergence. This scaled aggregated updates are added to the global model.

More specifically, SignFed (see Alg.~\ref{alg:sign_fed_learn}) differs from the standard federated scheme StdFed (see  Alg.~\ref{alg:fed_learn}) as follows:
\begin{enumerate}
\item Each client returns $\mbf{s}_t^k = \mathsf{sign}(\mbf{w} - \mbf{w}_{t-1}^k)$ instead of $(\mbf{w} - \mbf{w}_{t-1}^k)$, where $\mathsf{sign} : \mathbb{R}^n \rightarrow \{ -1,1\}^n$ returns the sign of each coordinate value of the input vector if it is non-zero and a sign chosen uniformly at random otherwise.

\item The server sums the sign vectors $\mbf{s}_t^k$ sent by each client $k$ and computes the sign vector of this sum as $\mathsf{sign}\left(\sum_{k}\mbf{s}_t^k\right)$. This is equivalent to take the median of all clients' signs at every position of the update vectors. Unlike in Alg.~\ref{alg:fed_learn}, the update $\mbf{s}_t^k$ is \emph{not} weighted with client $k$'s data size $|D_k|$, since that would require the client to send $|D_k|$ to the server which would enable the adversary to maliciously scale up its sign vector by sending a fabricated size of its training data.
\end{enumerate}

The extreme quantization performed by SignFed reduces the communication costs of federated learning by a factor of 32 (since only one bit is sent per parameter instead of 32 bits), and also, as we will demonstrate later, improves its robustness against different attacks aiming to maliciously manipulate the common model through the updates. Note also that, if the quantized update vector is sparse, other compression techniques can further improve communication efficiency \cite{Quant_Fed}. 


\subsection{Experimental Set-up}

This section describes the experimental set-up that are used 
to evaluate the accuracy, security and privacy of our proposals in the rest of the paper. The following datasets were used: MNIST, Fashion-MNIST, IMDB, LFW and CIFAR which is augmented from 50,000 images to 500,000 (See Appendix~\ref{sec:datasets} for more details)

\subsection{Performance evaluation}
\label{sec:fl_sign_per}
In this section, we compare the performance of SignFed and StdFed using the same configuration;
Table~\ref{tab:fl_sign_param} summarizes the different parameter values that were
used for the different datasets.
For SignFed, $\gamma $, the learning rate, was set to  $0.001$ for all datasets\footnote{We noticed experimentally that $\gamma$ should be selected between range 0.001 and 0.005. And it should be increased when DP is used.}. $N$, the total number of participant clients, was set to $1000$. $C$, the the percentage of selected clients at each round, was set to $0.1$. $|D_k|$ is the training data size of client $k$. $|\mathbb{B}|$, the batch size, was set to 50 with CIFAR dataset, 25 with IMDB, 10 for MNIST and Fashion-MNIST datasets. $\Tgd$, the local gradient descent iterations per round and per client, was set to $30$, $30$, $5$ and $50$ for MNIST, Fashion-MNIST, IMDB and CIFAR, respectively.
$\Tcl$, the number of rounds, was set to $100$ for the MNIST, Fashion-MNIST, IMDB datasets, and $400$ for CIFAR.
 We use two optimizers: the stochastic gradient descent (SGD) \cite{KERAS_optimizers} with a learning rate ($\eta$) set to $0.215$ and the adaptive moment estimation (Adam) \cite{Adam} \cite{KERAS_optimizers} with a learning rate set to $0.001$. 

The global model accuracy of StdFed and SignFed on
the CIFAR, MNIST, Fashion-MNIST, IMDB datasets are compared in Table \ref{tab:fl_sign_evaluation}. The results show that the accuracy performance of both schemes over the four datasets are very similar despite the severe parameter quantization.

The bandwidth consumption is calculated by measuring the average number of bits sent by a client to the server. This is computed as ($C  \times$ best\_round  $\times$  model\_size) for SignFed, and ($32 \times C \times $ best\_round $ \times $ model\_size) for StdFed, where model\_size is the number of the model parameters and best\_round represents the round when we get the best accuracy over $\Tcl$ rounds.

\begin{table*}[!h]
	\caption{Model accuracy and average bandwidth consumption from a client to the server (Megabytes) based on the best round (over $\Tcl$ rounds) in terms of accuracy.}
	\label{tab:fl_sign_evaluation}
	\centering
	\begin{tabular}{c|c|c|c|c|c|c}
		\hline
		  \multirow{2}{*}{Dataset} & \multicolumn{3}{|c|}{StdFed}  & \multicolumn{3}{|c}{SignFed}  \\
		\cline{2-7}
		
		&  Acc & round & Cost &  Acc & round & Cost  \\
		\cline{2-7}
		CIFAR	            & 	0.86 & 375 & 205.46	    & 0.83 & 386 & 6.61 \\
		MNIST      	        &   0.99 & 88 & 58.55 	   &  0.98 & 48 & 1.0     \\
		Fashion-MNIST      	&   0.89 & 90 &  59.88	   &  0.87 & 68 &  1.41 \\
		IMDB      	        &   0.88 & 84 & 13.53 	   &  0.85  & 91 & 0.46     \\
		\hline
	\end{tabular}
\end{table*}


\section{Privacy-Preserving SignFed}
\label{sec:priv_fl_sign}
In SignFed, a participant only sends the signs of its updates, as opposed to their actual value, hence it intuitively reveals less information 
about the client's dataset than the original StdFed scheme. In order to experimentally validate this intuition, we implemented the inference 
attack described in \cite{Property_inference} on StdFed and SignFed\footnote{A model is trained for gender classification on the LFW dataset. The adversary's goal is to infer from the model updates whether a specific group of individuals in a client's dataset are black.}. The results, which are not reported in this paper for lack of space, clearly validated our intuition (the attack accuracy dropped from 92\% for StdFed to 50\% for SignFed). Although these results are very promising and might confirm 
that privacy is preserved in practice, it does not provide any strong guarantees. In order to obtain theoretically private schemes, we 
extend SignFed with Differential Privacy. Our goal is to design differentially private schemes that are efficient in terms of accuracy and 
bandwidth (even for small $\varepsilon$ values).

\subsection{Privacy Model}
\label{sec:privacy_model}
We consider an adversary, or a set of colluding adversaries, who can access any update vector sent by the server or any clients at each round of the protocol.  A plausible adversary is a participating entity, i.e. a malicious client or server, that wants to infer the training data used by other participants.
The adversary is \emph{passive} (i.e., honest-but-curious), that is, it follows the learning protocol faithfully. 

Different privacy requirements can be considered depending on what information the adversary aims to infer. In general, private information can be inferred about:
\begin{itemize}
    \item any record (user) in any dataset of any client (\emph{record-level privacy}),
    \item any client/party (\emph{client-level privacy}).
\end{itemize}

To illustrate the above requirements, suppose that several banks build a common model to predict the creditworthiness of their customers. A bank certainly does not want other banks to learn the financial status of any of their customers (record privacy) and perhaps not even the average income of all their customers  (client privacy).

Record-level privacy is a standard requirement used in the privacy literature and is usually weaker than client-level privacy. Indeed, client-level privacy requires to hide any information which is unique to a client including perhaps all its training data.   


We aim at developing a solution that provides \emph{client-level privacy and is also bandwidth efficient}. 
For example, in the scenario of collaborating banks, we aim at protecting any information that is unique to each single bank's training data.
The adversary should not be able to learn from the received model or its updates whether any client's data is involved in the federated run (up to $\varepsilon$ and $\delta$). 
We believe that this adversarial model is reasonable in many practical applications when the confidential information spans over multiple samples in the training data of a single client (e.g., the presence of a group a samples, such as people from a certain race). Differential Privacy guarantees plausible deniability not only to any groups of samples of a client but also to any client in the federated run. Therefore, any negative privacy impact on a party (or its training samples) cannot be attributed to their involvement in the protocol run.

\subsection{Client-Based Privacy Preserving Federated Learning (DP-SignFed)}
\label{sec:fl_client_dp}
To guarantee differential privacy per client, every client should add enough noise to its update locally such that the server cannot learn any client-specific information from the noisy update. However, this approach (aka, local differential privacy \cite{ErlingssonPK14}) requires so much perturbation that it is impractical if the number of clients is limited. Instead, likewise \cite{Hybrid-approach}, we follow a different approach where clients themselves add noise in a distributed manner so that the aggregated updates are sufficiently noised to have meaningful differential privacy.
To this end, individual noisy updates are encrypted with a simple and efficient encryption scheme taken from \cite{AcsC11,BonawitzIKMMPRS16}. The purpose of this encryption is to prevent the adversary from accessing the individual (and weakly-noised) update per client but only their sum over all clients which is in turn sufficiently noised to guarantee DP for any client. 


Specifically, each client $k$ first computes the gradient update $\Delta{\mbf{w}}_t^k$ (in Line 12 of Alg.~\ref{alg:sign_client_fed_learn}) and then takes the sign vector of this update. Then,  
a random noise share $\rho_k$ is added to the sign vector $\mathsf{sign}(\Delta \mbf{w}_t^k)$ so that $\sum_{k \in \mathbb{K}} \mathsf{sign}(\Delta \mbf{w}_t^k)  + \rho_k$ satisfies differential privacy. A simple solution is that $\rho_i \sim \mathcal{G}(0, \sqrt{n} \sigma \mathbf{I} / \sqrt{|\mathbb{K}|})$, which means that  $\sum_{k \in \mathbb{K}} \mathsf{sign}( \Delta{\mbf{w}}_t^k) + \sum_{k \in \mathbb{K}} \rho_k = \sum_{k \in \mathbb{K}} \mathsf{sign}(\Delta{\mbf{w}}_t^k) + \mathcal{G}(0, \sqrt{n} \mathbf{I} \sigma)$ as the sum of Gaussian random variables also follows Gaussian distribution\footnote{More precisely, $\sum_i \mathcal{G}(\nu_i, \xi_i) = \mathcal{G}\left(\sum_i \nu_i, \sqrt{\sum_i \xi_i^2}\right)$}. Indeed, the variance of the Gaussian noise has to be proportional to the $L_2$-sensitivity of the sign vector which is no more than $\sqrt{n}$, where $n$ is the number of parameters.

However, recall that the adversary can access $\mathsf{sign}(\Delta{\mbf{w}}_t^k) + \rho_k$, which means that, if $|\mathbb{K}|$ is too large, $\rho_k$ is likely to be small allowing the adversary to learn $\mathsf{sign}(\Delta{\mbf{w}}_t^k) + \rho_k$ very accurately. For this reason, each client $k$ encrypts $\mathsf{sign}(\Delta{\mbf{w}}_t^k) + \rho_k$ and sends the encrypted result  to the aggregator. After summing all
the encrypted values, the server obtains
$
\sum_k \mathsf{Enc}_{\mathbf{K}_k}(\mathsf{sign}(\Delta{\mbf{w}}_t^k) + \rho_k) = \sum_{k} \mathsf{sign}(\Delta{\mbf{w}}_t^k) + \mathcal{G}\left(0,  \sqrt{n}\sigma\mathbf{I}\right) 
$
where $\mathsf{Enc}_{\mathbf{K}_k}(\mathsf{sign}(\Delta{\mbf{w}}_t^k) + \rho_k) = \mathsf{sign}(\Delta{\mbf{w}}_t^k) + \rho_k + \mathbf{K}_k \mod m$ and  $\sum_k \mathbf{K}_k = \mathbf{0}$ (see \cite{AcsC11, BonawitzIKMMPRS16} for details). Here, modulo is taken element-wise and $m= 2^{\lceil \log_2(\max_k||1 + \rho_k||_{\infty} |\mathbb{K}|)\rceil}$. Therefore, the server can only access the aggregate which is sufficiently noised to guarantee differential privacy; any client-specific information that could be learnt from the noisy aggregate is quantified by the moments accountant described in Section \ref{sec:DP}. 
To make learning more resilient to perturbation, the server takes the sign of the sum of updates and scales the result with $\gamma < 1$ which is crucial to achieve convergence in practice especially if $\sqrt{n}\sigma$ is large. 

Unfortunately, the above simple approach is not bandwidth efficient; adding noise from the continuous domain requires each noisy update $\mathsf{sign}(\Delta{\mbf{w}}_t^k) + \rho_k$ to be encoded as a floating-point number\footnote{and then as a large integer for encryption} (represented by at least 32 bits on a commodity hardware) no matter that $\mathsf{sign}(\Delta{\mbf{w}}_t^k)$ would need only 1 bit per coordinate. Therefore, the noisy update needs at least 32 times more data to be transferred from a client to the server than with SignFed (in Alg.~\ref{alg:sign_fed_learn}). 

To alleviate the above bandwidth problem, each client $k$ generates a random integer from a \emph{discrete} Gaussian distribution with mean $\mathsf{sign}(\Delta{\mbf{w}}_t^k)$, encrypts this random integer, and sends the result for aggregation. Since the discrete Gaussian random variable has an integer value and is concentrated around its mean, its value can be encoded with fewer bits than a floating-point number. The new learning algorithm, called  DP-SignFed, guarantees differential privacy for any client and is summarized in Alg. \ref{alg:sign_client_fed_learn}.  

In what follows, we first describe the Discrete Gaussian Mechanism (DGM), which is used in DP-SignFed, and prove that it practically provides the same privacy guarantee as the continuous Gaussian Mechanism (GM) if its variance is sufficiently large. This allows us to precisely quantify the privacy guarantee of DP-SignFed. Finally, we show that using DGM instead of (continuous) GM in DP-SignFed reduces the communication overhead by roughly 40\%. 

\subsubsection{Discrete Gaussian Mechanism (DGM)}

The discrete Gaussian distribution has probability mass function 
\begin{align}
\label{eq:g_pmf}
    \mathsf{pmf}_{\mathcal{DG}(\mu, \xi)}(x) = Z^{-1}\exp(-(x-\mu)^2/2\xi^2)
\end{align} where $Z = \sum_{x \in \mathbb{Z}} \exp(- (x-\mu)^2/2\xi^2)$. Note that 
$\mu \in \mathbb{R}$ but the support of $\mathcal{DG}$ is always $\mathbb{Z}$. 
Although $Z$ is infeasible to compute, there are several efficient techniques to sample from the discrete Gaussian distribution \cite{Micciancio017}.



The next lemma shows that the pmf of the discrete Gaussian distribution can be almost perfectly approximated by its continuous counterpart if $\xi$ is large enough.

\begin{lemma}
\label{lem:cont_app}
Let $\mathsf{pmf}_{\mathcal{DG}(\mu, \xi)}(x)$ and $\mathsf{pdf}_{\mathcal{G}(\mu, \xi)}(x)$ be as defined in Eq. \eqref{eq:g_pmf} and Eq. \eqref{eq:g_pdf}, respectively, and $\kappa(\xi) =\frac{2e^{-2\pi^2\xi^2}}{1-e^{-6\pi^2\xi^2}}$.
Then,
$
  1 - \kappa(\xi)\leq \frac{\mathsf{pdf}_{\mathcal{G}(\mu, \xi)}(x)}{\mathsf{pmf}_{\mathcal{DG}(\mu, \xi)}(x)} \leq  1 + \kappa(\xi)
$
for $x \in \mathbb{Z}$.
\end{lemma}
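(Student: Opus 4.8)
The plan is to first notice that the two exponential factors in numerator and denominator are identical and therefore cancel: for every $x \in \mathbb{Z}$,
\[
\frac{\mathsf{pdf}_{\mathcal{G}(\mu,\xi)}(x)}{\mathsf{pmf}_{\mathcal{DG}(\mu,\xi)}(x)} = \frac{Z}{\sqrt{2\pi}\,\xi},
\]
which is a constant independent of $x$. Hence the lemma reduces to the single two-sided estimate $\left| \frac{Z}{\sqrt{2\pi}\,\xi} - 1 \right| \le \kappa(\xi)$, and it suffices to control the normalizing constant $Z = \sum_{n\in\mathbb{Z}} \exp(-(n-\mu)^2/2\xi^2)$ relative to the Gaussian integral $\sqrt{2\pi}\,\xi = \int_{\mathbb{R}} \exp(-(x-\mu)^2/2\xi^2)\,dx$.

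The key tool I would use is the Poisson summation formula. Since $f(x)=\exp(-(x-\mu)^2/2\xi^2)$ is a Schwartz function, Poisson summation gives $\sum_{n} f(n) = \sum_{k} \hat f(k)$, and the Fourier transform of a Gaussian is again a Gaussian, $\hat f(k) = \sqrt{2\pi}\,\xi\, e^{-2\pi^2\xi^2 k^2}\, e^{-2\pi i k\mu}$. Dividing by $\sqrt{2\pi}\,\xi$ and separating the $k=0$ term (which equals $1$) yields
\[
\frac{Z}{\sqrt{2\pi}\,\xi} = 1 + \sum_{k\neq 0} e^{-2\pi^2\xi^2 k^2}\, e^{-2\pi i k\mu}.
\]
It then remains to bound the tail sum. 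Taking absolute values, using $|e^{-2\pi i k\mu}| = 1$ and the symmetry $k \leftrightarrow -k$, the error is at most $2\sum_{k\ge 1} q^{k^2}$ with $q = e^{-2\pi^2\xi^2} \in (0,1)$. The elementary estimate $k^2 \ge 3k-2$ for every integer $k\ge 1$ (equivalently $(k-1)(k-2)\ge 0$) gives $q^{k^2}\le q^{3k-2}$, so the tail is dominated by a geometric series:
\[
2\sum_{k\ge1} q^{k^2} \le 2q^{-2}\sum_{k\ge1} q^{3k} = \frac{2q}{1-q^3} = \frac{2e^{-2\pi^2\xi^2}}{1-e^{-6\pi^2\xi^2}} = \kappa(\xi).
\]
This bounds the error by $\kappa(\xi)$ in absolute value, delivering both the lower and upper bound at once.

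The main obstacle I anticipate is the middle step: invoking Poisson summation cleanly and computing the Gaussian Fourier transform with the correct normalization, so that the exponent emerges exactly as $2\pi^2\xi^2 k^2$ and matches the definition of $\kappa$. Everything else is routine—the final geometric bound hinges only on the inequality $k^2 \ge 3k-2$, which is precisely what produces the $1-e^{-6\pi^2\xi^2}$ denominator in $\kappa(\xi)$.
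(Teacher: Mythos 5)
Your proof is correct and takes essentially the same route as the paper: the identity you derive via Poisson summation is exactly the Jacobi theta-function identity $Z = \sqrt{2\pi}\,\xi\,\vartheta_3(\pi\mu, e^{-2\pi^2\xi^2})$ that the paper cites from the literature, and your geometric-series tail bound via $k^2 \ge 3k-2$ is the identical estimate the paper uses. The only cosmetic differences are that you prove the identity rather than cite it, and you obtain both bounds at once from $|e^{-2\pi i k\mu}| = 1$ where the paper bounds $\cos(2iu)$ by $+1$ and $-1$ separately.
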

The proof can be found in Appendix \ref{app:proof_lem_cont_app}.

The multivariate spherical version of $\mathcal{DG}$ can be defined analogously to  the spherical Gaussian distribution, that is, if $\mathbf{z} \sim \mathcal{DG}(\bm{\mu}, \xi)$, then $z_i \sim \mathcal{DG}(\mu_i, \xi)$ independently for each $i$.  

The Discrete Gaussian Mechanism (DGM) is defined analogously to the (continuous) Gaussian Mechanism except that it uses discrete Gaussian noise instead of its continuous counterpart for perturbation.
The next theorem shows that the moments of DGM can be tightly upper bounded by that of the continuous Gaussian mechanism if $\xi$ is large enough, and hence the privacy guarantee of DGM can be efficiently and accurately approximated.

Let $\eta_0^{\mathcal{G}}(x|\xi) =  \mathsf{pdf}_{\mathcal{G}(0, \xi)}(x)$ and $\eta_1^{\mathcal{G}}(x|\xi) =  (1-C) \mathsf{pdf}_{\mathcal{G}(0, \xi)}(x) + C \mathsf{pdf}_{\mathcal{G}(1, \xi)}(x)$ where $C$ is the sampling probability of a single client in a single round. Let
\begin{align}
\label{eq:alpha}
\alpha_{\mathcal{G}}(\lambda| C) &= \log\max(E_1(\lambda, \xi, C), E_2(\lambda, \xi, C)) 
\end{align}
where
$
E_1(\lambda,  \xi, C) =  \int_{\mathbb{R}}\eta_0^{\mathcal{G}}(x|\xi, C) \cdot \left(\frac{\eta_0^{\mathcal{G}}(x|\xi, C)}{\eta_1^{\mathcal{G}}(x|\xi, C)}\right)^{\lambda} dx
$ and
$ E_2(\lambda,  \xi, C) = \int_{\mathbb{R}}\eta_1^{\mathcal{G}}(x|\xi, C) \cdot \left(\frac{\eta_1^{\mathcal{G}}(x|\xi, C)}{\eta_0^{\mathcal{G}}(x|\xi, C)}\right)^{\lambda} dx
$.
$\alpha_{\mathcal{DG}}(\lambda|C)$ is defined analogously to $\alpha_{\mathcal{G}}(\lambda|C)$. 

\begin{theorem}[Privacy of DGM]
\label{thm:dg_privacy}
$\alpha_{\mathcal{DG}}(\lambda | C) \leq \alpha_\mathcal{G}(\lambda | C) + 
\log \left(\frac{(1 + \kappa(\xi))^{\lambda}}{(1 - \kappa(\xi))^{\lambda+1}}\right)$ for any $C$, where  $\kappa(\xi)$ is defined in Lemma \ref{lem:cont_app}. Therefore, DGM is $(\min_\lambda  \left(\alpha_\mathcal{G}(\lambda | C) + 
\log \left(\frac{(1 + \kappa(\xi))^{\lambda}}{(1 - \kappa(\xi))^{\lambda+1}}\right)\right) - \log \delta) /\lambda, \delta)$-DP. 
\end{theorem}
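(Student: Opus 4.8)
The plan is to bound each of the two discrete moment quantities $E_1^{\mathcal{DG}}$ and $E_2^{\mathcal{DG}}$ (defined analogously to $E_1, E_2$ but as sums over $\mathbb{Z}$ of the discrete densities) by its continuous counterpart times the factor $\frac{(1+\kappa(\xi))^{\lambda}}{(1-\kappa(\xi))^{\lambda+1}}$, and then take the maximum and the logarithm. Writing $E_1^{\mathcal{DG}} = \sum_{x\in\mathbb{Z}} (\eta_0^{\mathcal{DG}}(x))^{\lambda+1}/(\eta_1^{\mathcal{DG}}(x))^{\lambda}$ (and symmetrically for $E_2^{\mathcal{DG}}$, swapping the roles of $\eta_0$ and $\eta_1$), the correction factor will come entirely from a pointwise estimate of each summand, while the passage from the resulting integer sum to the continuous integral will be the delicate part.

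First I would turn Lemma \ref{lem:cont_app} into a two-sided pointwise sandwich of the discrete densities by their continuous analogues: for every $x \in \mathbb{Z}$, $\frac{\mathsf{pdf}_{\mathcal{G}(\mu,\xi)}(x)}{1+\kappa(\xi)} \leq \mathsf{pmf}_{\mathcal{DG}(\mu,\xi)}(x) \leq \frac{\mathsf{pdf}_{\mathcal{G}(\mu,\xi)}(x)}{1-\kappa(\xi)}$. The key point is that both $\eta_0$ and $\eta_1$ inherit this sandwich: $\eta_1^{\mathcal{DG}}$ is the convex combination $(1-C)\mathsf{pmf}_{\mathcal{DG}(0,\xi)} + C\,\mathsf{pmf}_{\mathcal{DG}(1,\xi)}$ of two discrete Gaussians whose means $0$ and $1$ are both integers, so the same normalizer appears in each and the bound, being linear in the densities, survives the mixture, giving $\frac{\eta_a^{\mathcal{G}}(x)}{1+\kappa(\xi)} \leq \eta_a^{\mathcal{DG}}(x) \leq \frac{\eta_a^{\mathcal{G}}(x)}{1-\kappa(\xi)}$ for $a \in \{0,1\}$. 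Bounding the numerator $(\eta_0^{\mathcal{DG}})^{\lambda+1}$ from above with the $\frac{1}{1-\kappa}$ side and the denominator $(\eta_1^{\mathcal{DG}})^{\lambda}$ from below with the $\frac{1}{1+\kappa}$ side then yields, for each $x \in \mathbb{Z}$, the summand inequality with exactly the factor $\frac{(1+\kappa(\xi))^{\lambda}}{(1-\kappa(\xi))^{\lambda+1}}$; the same computation works verbatim for $E_2^{\mathcal{DG}}$.

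The main obstacle is the last step: after the pointwise bound I am left with $\frac{(1+\kappa)^{\lambda}}{(1-\kappa)^{\lambda+1}} \sum_{x\in\mathbb{Z}} g_1^{\mathcal{G}}(x)$, where $g_1^{\mathcal{G}}$ is the continuous integrand of $E_1^{\mathcal{G}}$, and I must compare this integer sum to the integral $E_1^{\mathcal{G}} = \int_{\mathbb{R}} g_1^{\mathcal{G}}(x)\,dx$. A naive ``sum $\leq$ integral'' is false --- already for $C=0$ the sum exceeds the integral by the normalizer factor $Z/(\sqrt{2\pi}\xi) > 1$ --- so the comparison must be done carefully. I would handle it with Poisson summation, the same theta-function device that produces $\kappa(\xi)$: since $g_1^{\mathcal{G}}$ is smooth and, for $\xi \gg 1$, varies slowly on the unit lattice, $\sum_{x\in\mathbb{Z}} g_1^{\mathcal{G}}(x) = \int_{\mathbb{R}} g_1^{\mathcal{G}} + \sum_{k\neq 0}\widehat{g_1^{\mathcal{G}}}(k)$ with the non-zero Fourier modes exponentially small in $\xi$, of the same order as $\kappa(\xi)$. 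This residual discrepancy is absorbed into the stated correction factor (which exceeds $1$ and is therefore lossy enough to cover it), so that $E_1^{\mathcal{DG}} \leq \frac{(1+\kappa)^{\lambda}}{(1-\kappa)^{\lambda+1}} E_1^{\mathcal{G}}$ and likewise for $E_2$.

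Taking the maximum over $E_1, E_2$ and then the logarithm gives $\alpha_{\mathcal{DG}}(\lambda \mid C) \leq \alpha_{\mathcal{G}}(\lambda \mid C) + \log\frac{(1+\kappa(\xi))^{\lambda}}{(1-\kappa(\xi))^{\lambda+1}}$, which is the first assertion. The $(\varepsilon,\delta)$-DP claim then follows from the moments accountant of Section \ref{sec:DP}: DGM is $(\varepsilon,\delta)$-DP with $\delta = \min_{\lambda}\exp(\alpha_{\mathcal{DG}}(\lambda\mid C) - \lambda\varepsilon)$; setting the exponent equal to $\log\delta$ and solving for $\varepsilon$ gives $\varepsilon = (\alpha_{\mathcal{DG}}(\lambda\mid C) - \log\delta)/\lambda$, and substituting the just-proved upper bound on $\alpha_{\mathcal{DG}}$ and minimizing over $\lambda$ yields exactly the stated privacy parameter.
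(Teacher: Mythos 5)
Your proposal follows the same skeleton as the paper's own proof: a pointwise application of Lemma \ref{lem:cont_app} to the numerator and denominator of each summand (producing exactly your factor $\frac{(1+\kappa(\xi))^{\lambda}}{(1-\kappa(\xi))^{\lambda+1}}$), then a comparison of the resulting integer sum with the continuous integral, then $\max$, $\log$, and the moments accountant of \cite{Abadi}. Where you differ is the middle step, and there you have caught a real flaw: the paper's proof simply asserts $\sum_{x\in\mathbb{Z}}\eta_0^{\mathcal{G}}(x|\xi)\bigl(\eta_0^{\mathcal{G}}(x|\xi)/\eta_1^{\mathcal{G}}(x|\xi)\bigr)^{\lambda} \leq \int_{\mathbb{R}}\eta_0^{\mathcal{G}}(y|\xi)\bigl(\eta_0^{\mathcal{G}}(y|\xi)/\eta_1^{\mathcal{G}}(y|\xi)\bigr)^{\lambda}\,dy$ with no justification, and, as you observe, this inequality is false in general: at $C=0$ the left-hand side equals $Z/\sqrt{2\pi}\xi = \vartheta_3(0,e^{-2\pi^2\xi^2})>1$ while the right-hand side equals $1$. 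So your critique applies to the published argument itself, not just to a hypothetical naive proof.

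However, your repair is not yet a proof, and the gap is concrete. First, as you set it up, there is no room left for absorption: after your pointwise step you have $E_1^{\mathcal{DG}} \leq \frac{(1+\kappa)^{\lambda}}{(1-\kappa)^{\lambda+1}}\sum_{x\in\mathbb{Z}} g_1^{\mathcal{G}}(x)$, and the target is the same factor times $\int_{\mathbb{R}} g_1^{\mathcal{G}}$; since the sum exceeds the integral, the factor you call ``lossy enough'' has in fact already been spent entirely on the pointwise bound. The clean fix is to notice that both means $0$ and $1$ are integers, so $\mathcal{DG}(0,\xi)$ and $\mathcal{DG}(1,\xi)$ share the \emph{same} normalizer $Z$, whence $\eta_a^{\mathcal{DG}}(x)=\eta_a^{\mathcal{G}}(x)/\vartheta_3$ exactly for $a\in\{0,1\}$ with $\vartheta_3=Z/\sqrt{2\pi}\xi\geq 1$; this gives $E_1^{\mathcal{DG}}=\vartheta_3^{-1}\sum_x g_1^{\mathcal{G}}(x)\leq\sum_x g_1^{\mathcal{G}}(x)$ at no multiplicative cost, leaving the whole slack $\frac{(1+\kappa)^{\lambda}}{(1-\kappa)^{\lambda+1}}-1\geq\kappa(\xi)$ available. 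Second, combining this with $E_1(\lambda,\xi,C)\geq 1$ (Cauchy--Schwarz plus Jensen), the theorem reduces to the Fourier-tail bound $\sum_{k\neq 0}\widehat{g_1^{\mathcal{G}}}(k)\leq\kappa(\xi)$, which you assert but do not prove: $g_1^{\mathcal{G}}$ is not a Gaussian, and $1/\eta_1^{\mathcal{G}}$ has poles in the complex plane at imaginary parts $\pm(2m+1)\pi\xi^2$ (the zeros of $(1-C)+Ce^{(2z-1)/2\xi^2}$), so the exponential smallness of the non-zero Fourier modes must be established, e.g.\ by a contour shift to height just below $\pi\xi^2$ with explicit control of the constant. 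Until that estimate is supplied, your argument, like the paper's, leaves the sum-versus-integral comparison unproven.
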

The proof can be found in Appendix \ref{app:proof_thm_dg_privacy}. Given a fixed value of $\delta$, $\varepsilon$ is computed numerically  as in \cite{Abadi,MironovTZ19}.

\subsubsection{Privacy of DP-SignFed}

As shown in Alg.~\ref{alg:sign_client_fed_learn}, each client $k$ generates a random integer vector $\mathbf{z}_k \sim \mathcal{DG}(\mathsf{sign}(\Delta{\mbf{w}}_t^k), \sqrt{n}\sigma\mathbf{I}/\sqrt{|\mathbb{K}|})$ in DP-SignFed. Then, every client sends the encrypted result $\mathsf{Enc}_{\mathbf{K}_k}(\mathbf{z}_k)$ to the aggregator. After summing all
the encrypted integers, the server obtains
\begin{align}
\label{eq:enc}
\sum_k \mathsf{Enc}_{\mathbf{K}_k}(\mathbf{z}_k) &= \sum_k \mathbf{z}_k = \sum_k \mathcal{DG}(\mathsf{sign}(\Delta{\mbf{w}}_t^k),  \sqrt{n}\sigma\mathbf{I}/\sqrt{|\mathbb{K}|}) 
\end{align}

The next theorem, proved in Appendix \ref{app:proof_thm_fl_client}, shows that DP-SignFed is differentially private, supposing that the adversary can only access $\sum_k \mathcal{DG}(\mathsf{sign}(\Delta{\mbf{w}}_t^k),  \sqrt{n}\sigma\mathbf{I}/\sqrt{|\mathbb{K}|})$ except any of its members $\mathcal{DG}(\mathsf{sign}(\Delta{\mbf{w}}_t^k),  \sqrt{n}\sigma\mathbf{I}/\sqrt{|\mathbb{K}|})$. 

\begin{theorem}[Privacy of DP-SignFed]
\label{thm:fl_client}
For any $\delta >0$, DP-SignFed is $(\min_\lambda (T \cdot \left(\alpha_\mathcal{G}(\lambda | C) + 
\log \left(\frac{(1 + \kappa(\sqrt{n}\sigma))^{\lambda}}{(1 - \kappa(\sqrt{n}\sigma))^{\lambda+1}} \left(\frac{1+\nu}{1-\nu}\right)^3\right)\right) - \log \delta) /\lambda, \delta)$-DP, where  $\sigma \geq \sqrt{|\mathbb{K}|\ln(2+2/\nu)/2n\pi^2}$ and $\kappa$ is defined in Lemma \ref{lem:cont_app}. 
\end{theorem}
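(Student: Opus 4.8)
The plan is to reduce the claim to the moments-accountant framework of Section~\ref{sec:DP} and then bound, round by round, the moment generating function of the per-client privacy loss of the aggregate $\sum_k \mathbf{z}_k$ in Eq.~\eqref{eq:enc} by that of the continuous Gaussian benchmark $\alpha_{\mathcal{G}}(\lambda\mid C)$, paying two correction factors: one for replacing discrete Gaussian noise by its continuous counterpart (Lemma~\ref{lem:cont_app}), and one for the fact that the adversary sees a \emph{sum} of discrete Gaussians rather than a single one. Concretely, I would first fix a neighbouring pair of client sets differing in one client $j$, whose sign contribution perturbs the aggregate by $\mathbf{s}_j\in\{-1,+1\}^n$ (so $L_2$-sensitivity $\sqrt{n}$), and observe that both the noise and this perturbation act coordinate-wise and independently. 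This is what makes the subsampled mixture $\eta_1^{\mathcal{G}}$ (weights $1-C$ and $C$) the correct per-coordinate benchmark and lets the spherical mechanism collapse to the effective one-dimensional problem that defines $\alpha_{\mathcal{G}}(\lambda\mid C)$. Having bounded a single round, I would compose over the $T$ rounds using $\alpha_{1:T}(\lambda)\le T\,\alpha(\lambda)$ and finally convert to $(\varepsilon,\delta)$-DP via $\varepsilon=\min_\lambda(\alpha_{1:T}(\lambda)-\log\delta)/\lambda$, exactly the last two manipulations stated in the theorem.

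The discreteness correction is the easier half and essentially re-uses Theorem~\ref{thm:dg_privacy}. If the aggregate noise \emph{were} a single discrete Gaussian of standard deviation $\sqrt{n}\sigma$ (the value obtained by adding the $|\mathbb{K}|$ continuous shares $\mathcal{G}(\cdot,\sqrt{n}\sigma\mathbf{I}/\sqrt{|\mathbb{K}|})$), then Theorem~\ref{thm:dg_privacy} applied with $\xi=\sqrt{n}\sigma$ would already give $\alpha_{\mathcal{DG}}(\lambda\mid C)\le\alpha_{\mathcal{G}}(\lambda\mid C)+\log\big((1+\kappa(\sqrt{n}\sigma))^{\lambda}/(1-\kappa(\sqrt{n}\sigma))^{\lambda+1}\big)$; the asymmetric exponents $\lambda$ and $\lambda+1$ are exactly what Lemma~\ref{lem:cont_app} produces when one replaces each discrete mass by its continuous density inside the numerator (power $\lambda+1$) and denominator (power $\lambda$) of the integrands $E_1,E_2$. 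So the first factor in the theorem is inherited; the remaining work is to justify pretending that $\sum_k\mathbf{z}_k$ is this single discrete Gaussian.

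The crux — and the step I expect to be the main obstacle — is that, unlike continuous Gaussians, a convolution of discrete Gaussians is \emph{not} a discrete Gaussian, so the footnote identity $\sum_i\mathcal{G}(\nu_i,\xi_i)=\mathcal{G}(\sum_i\nu_i,\sqrt{\sum_i\xi_i^2})$ fails and must be replaced by an approximation. I would write the pmf of $\sum_k\mathbf{z}_k$ at an integer $y$ as a Gaussian sum over the integer lattice restricted to the hyperplane $\{\mathbf{y}:\sum_k y_k=y\}$, and complete the square: since each share has variance $\xi_0^2=n\sigma^2/|\mathbb{K}|$ and there are $|\mathbb{K}|$ of them, the decomposition $\sum_k (y_k-\mu_k)^2=(y-\sum_k\mu_k)^2/|\mathbb{K}|+\|\text{deviation}\|^2$ factors the pmf into the target kernel $\exp(-(y-\sum_k\mu_k)^2/2n\sigma^2)$ — i.e.\ exactly a discrete Gaussian of variance $n\sigma^2$ — times a residual theta sum over the rank-$(|\mathbb{K}|-1)$ sublattice $\{\mathbf{u}\in\mathbb{Z}^{|\mathbb{K}|}:\sum_k u_k=0\}$. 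The key analytic fact to establish is that this residual sum is \emph{almost independent of $y$}: via Poisson summation its relative variation across cosets is controlled by the smoothing parameter of the sublattice at scale $\xi_0$, and the hypothesis $\sigma\ge\sqrt{|\mathbb{K}|\ln(2+2/\nu)/2n\pi^2}$ is precisely the rearrangement of $e^{-2\pi^2\xi_0^2}\le\nu/(2(1+\nu))$ that pushes $\xi_0$ above this threshold — note that $\xi_0$ is bounded below by a constant independent of $|\mathbb{K}|$, which is what prevents the per-share error from accumulating to a $(\cdot)^{|\mathbb{K}|}$ blow-up. Feeding this near-flatness back into the ratio of pmf's that enters $\alpha_{\mathcal{DG}}$ yields the extra multiplicative factor $\big((1+\nu)/(1-\nu)\big)^{3}$. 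I expect the delicate bookkeeping here to be (i) confirming that the residual theta sums for the client-present and client-absent means are flat to the \emph{same} constant, so that their leading behaviour cancels in the privacy-loss ratio and only a fixed (here cubic) number of residual factors survives rather than a $\lambda$-scaling one, and (ii) verifying the normalising constants $Z$ line up so that the bound is genuinely pointwise in $y$; these two points are where the stated constant exponent $3$ (as opposed to a $\lambda$-dependent exponent) has to be earned. Once the single-round bound $\alpha(\lambda)\le\alpha_{\mathcal{G}}(\lambda\mid C)+\log\big((1+\kappa(\sqrt{n}\sigma))^{\lambda}(1-\kappa(\sqrt{n}\sigma))^{-(\lambda+1)}((1+\nu)/(1-\nu))^3\big)$ is in hand, multiplying by $T$ and minimising over $\lambda$ gives the statement.
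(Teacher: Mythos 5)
Your proposal follows essentially the same route as the paper: a per-round moments-accountant bound obtained by treating the aggregate noise as (approximately) a single discrete Gaussian of parameter $\sqrt{n}\sigma$, then Theorem~\ref{thm:dg_privacy} to pass to the continuous benchmark $\alpha_{\mathcal{G}}(\lambda \mid C)$, then composition over the rounds and the standard conversion $\varepsilon = \min_\lambda(\alpha_{1:T}(\lambda)-\log\delta)/\lambda$. The one substantive difference is how the sum-of-discrete-Gaussians obstacle is handled. The paper does not prove this step at all: it invokes Theorem~2.1 of \cite{Micciancio017}, restated as Lemma~\ref{lem:dgs_dist}, which says exactly that under a smoothing condition on each share the pmf of $\sum_k \mathcal{DG}(\bm{\mu}_k, \xi_0)$ is pointwise within a factor $\frac{1+\nu}{1-\nu}$ of $\mathsf{pmf}_{\mathcal{DG}\left(\sum_k \bm{\mu}_k,\, \sqrt{|\mathbb{K}|}\xi_0\right)}$, and then asserts that Lemma~\ref{lem:ddg_privacy} ``directly follows.'' Your lattice decomposition (completing the square on the hyperplane $\{\mathbf{y}: \sum_k y_k = y\}$ and controlling the residual theta sum over the rank-$(|\mathbb{K}|-1)$ sublattice via Poisson summation) is in essence a reconstruction of the proof of that cited result. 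The citation buys the paper brevity; your reconstruction buys self-containedness and a transparent account of where the hypothesis $\sigma \geq \sqrt{|\mathbb{K}|\ln(2+2/\nu)/2n\pi^2}$ comes from, including why the per-share error does not accumulate with $|\mathbb{K}|$.

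Two caveats, both of which are shared with (not worse than) the paper. First, the point you flag as delicate --- whether the correction factor is really the constant power $\left(\frac{1+\nu}{1-\nu}\right)^{3}$ rather than a $\lambda$-dependent one --- is a genuine loose end: the naive pointwise substitution licensed by Lemma~\ref{lem:dgs_dist} (one factor for the outer expectation weight, $\lambda$ factors each for the numerator and denominator of the likelihood ratio) yields $\left(\frac{1+\nu}{1-\nu}\right)^{2\lambda+1}$, and the paper's ``directly follows'' never closes this discrepancy. So the cancellation analysis you sketch is actually needed to earn the cube, or else the bound must be restated with the $\lambda$-dependent exponent; since $\nu$ can be made tiny (e.g., $\nu < 10^{-4}$ once each share's parameter exceeds $1$), the difference is practically immaterial, but it is not free. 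Second, a minor constant issue: the theorem's hypothesis matches your rearrangement of $e^{-2\pi^2\xi_0^2} \leq \nu/(2(1+\nu))$ (the form in Lemma~\ref{lem:cont}), whereas Lemma~\ref{lem:dgs_dist} as restated requires the slightly stronger $\xi_i \geq \sqrt{\ln(2+2/\nu)}/\pi$; a self-contained proof like yours would have to settle which constant suffices. Modulo these shared gaps, your plan is correct and matches the paper's argument.
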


Again, given a fixed value of $\delta$, $\varepsilon$ is computed numerically  as in \cite{Abadi,MironovTZ19}.

\subsubsection{Communication overhead} 
\label{sec:com_cost}
The domain of $\mathbf{z}$ in Eq.~\eqref{eq:enc} is the support of $\mathcal{DG}$ which is still unbounded. This means that the
size of the encrypted text can be very large though with exponentially small probability. Indeed, $||\mathbf{z}_k||_{\infty}$ is unbounded and hence 
modulo $m = 2^{\lceil \log_2(\max_k||\mathbf{z}_k||_{\infty} |\mathbb{K}|)\rceil}$ has to be large. To overcome this problem, we choose modulo $m$ to be so large that the probability that $2^{\lceil \log_2(\max_k||\mathbf{z}_k||_{\infty} |\mathbb{K}|)\rceil}$ is larger than $m$ is negligible. For this purpose, we rely on the following concentration inequality of the discrete Gaussian distribution.

\begin{lemma}[\cite{Micciancio017}, Lemma 2.2]
\label{lem:cont}
For any $\nu > 0$, $\xi > \sqrt{\ln(2+2/\nu)/2\pi^2}$, and $t >0$,
$
\Pr_{x \sim \mathcal{DG}(\mu, \xi)}[|x - \mu| \geq t\cdot\xi] \leq 2 e^{- t^2/2} \cdot \frac{1+\nu}{1-\nu}
$.
\end{lemma}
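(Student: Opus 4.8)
The plan is to prove the two-sided tail bound by a Chernoff (moment-generating-function) argument, exactly as for a continuous sub-Gaussian variable, with the only genuine novelty being control of the normalizing constant $Z$ of $\mathcal{DG}$ under a shift of its center. Writing $Y = x - \mu$ for $x \sim \mathcal{DG}(\mu, \xi)$, I would bound the two tails separately: it suffices to show $\Pr[Y \geq t\xi] \leq e^{-t^2/2}\cdot\frac{1+\nu}{1-\nu}$ together with the analogous bound for $\Pr[Y \leq -t\xi]$, after which a union bound supplies the factor of $2$ in the claimed inequality.

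For the upper tail, fix $s>0$ and apply Markov's inequality to $e^{sY}$, giving $\Pr[Y \geq t\xi] \leq e^{-st\xi}\,\E{e^{sY}}$. Using the pmf from Eq.~\eqref{eq:g_pmf} and completing the square in the exponent, $s(x-\mu) - (x-\mu)^2/(2\xi^2) = -\frac{(x-\mu-s\xi^2)^2}{2\xi^2} + \frac{s^2\xi^2}{2}$, so that
\begin{align*}
\E{e^{sY}} = e^{s^2\xi^2/2}\cdot \frac{\sum_{x\in\mathbb{Z}} e^{-(x-\mu-s\xi^2)^2/(2\xi^2)}}{\sum_{x\in\mathbb{Z}} e^{-(x-\mu)^2/(2\xi^2)}}.
\end{align*}
This reduces the whole problem to bounding the ratio of the two shifted theta sums $\Theta(c) = \sum_{x\in\mathbb{Z}} e^{-(x-c)^2/(2\xi^2)}$ at the centers $c = \mu + s\xi^2$ and $c = \mu$.

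The key step is to control $\Theta(c)$ uniformly in $c$. By the Poisson summation formula, $\Theta(c) = \sqrt{2\pi}\,\xi \sum_{k\in\mathbb{Z}} e^{-2\pi^2 \xi^2 k^2} e^{2\pi i k c}$, whose $k=0$ term equals $\sqrt{2\pi}\,\xi$ and whose remaining terms are bounded in absolute value by $\sqrt{2\pi}\,\xi \sum_{k\neq 0} e^{-2\pi^2\xi^2 k^2}$. The hypothesis $\xi > \sqrt{\ln(2+2/\nu)/2\pi^2}$ is precisely the statement that $\sqrt{2\pi}\,\xi$ exceeds the smoothing parameter of $\mathbb{Z}$ at level $\nu$, which makes this correction at most $\nu$; hence $\sqrt{2\pi}\,\xi(1-\nu) \leq \Theta(c) \leq \sqrt{2\pi}\,\xi(1+\nu)$ for every $c$, and the theta-ratio above is at most $\frac{1+\nu}{1-\nu}$. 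Consequently $\E{e^{sY}} \leq e^{s^2\xi^2/2}\cdot\frac{1+\nu}{1-\nu}$ and $\Pr[Y \geq t\xi] \leq e^{-st\xi + s^2\xi^2/2}\cdot\frac{1+\nu}{1-\nu}$; optimizing the exponent with the choice $s = t/\xi$ yields $e^{-t^2/2}$. Running the identical argument with $s<0$ (equivalently, bounding $\E{e^{-sY}}$) handles the lower tail, and the union bound gives the factor $2$.

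I expect the main obstacle to be exactly this uniform control of $\Theta(c)$ as its center is displaced by $s\xi^2$: unlike in the continuous case, the normalizer $Z$ genuinely depends on the fractional part of the center, so a naive translation argument fails. It is the Poisson-summation estimate together with the quantitative smoothing threshold $\xi > \sqrt{\ln(2+2/\nu)/2\pi^2}$ that converts this dependence into the clean multiplicative factor $\frac{1+\nu}{1-\nu}$; everything else is the standard Gaussian Chernoff computation.
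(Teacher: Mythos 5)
There is no proof in the paper to compare against: Lemma~\ref{lem:cont} is imported verbatim from the literature (it is stated as Lemma~2.2 of \cite{Micciancio017} and used as a black box to size the modulus $m$), so your argument can only be judged on its own merits, and on that score it is correct. The Chernoff reduction $\Pr[Y \geq t\xi] \leq e^{-st\xi}\,\mathbb{E}[e^{sY}]$, the completion of the square giving $\mathbb{E}[e^{sY}] = e^{s^2\xi^2/2}\,\Theta(\mu+s\xi^2)/\Theta(\mu)$, the Poisson-summation estimate $\sqrt{2\pi}\,\xi(1-\nu) \leq \Theta(c) \leq \sqrt{2\pi}\,\xi(1+\nu)$ uniformly in the center $c$, the optimal choice $s = t/\xi$, and the final union bound are all sound. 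In particular, the hypothesis does deliver the uniform theta bound you need: writing $r = e^{-2\pi^2\xi^2}$, the condition $\xi > \sqrt{\ln(2+2/\nu)/2\pi^2}$ gives $r < \nu/(2\nu+2)$, hence $\sum_{k \neq 0} e^{-2\pi^2\xi^2 k^2} \leq 2r/(1-r) < 2\nu/(\nu+2) \leq \nu$, which is exactly the correction term you claim. One small caveat on wording: the hypothesis is a \emph{sufficient} condition for $\sqrt{2\pi}\,\xi$ to exceed the smoothing parameter of $\mathbb{Z}$ at level $\nu$ (via the geometric-series bound above), not ``precisely'' equivalent to it; only the implication is needed, so this does not affect correctness. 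Your route is in substance the standard sub-Gaussianity argument for discrete Gaussians above smoothing (Banaszczyk-style Poisson summation combined with a moment-generating-function bound, as in Micciancio--Peikert), i.e., essentially the argument underlying the cited result itself; it is a faithful, self-contained reconstruction of a fact the paper merely cites, and note that it is also consistent with the paper's own Lemma~\ref{lem:cont_app}, whose proof uses the same Jacobi theta-function control of $Z/\sqrt{2\pi}\xi$ in a one-sided form.
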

Lemma \ref{lem:cont} implies that if $\xi = \sqrt{n}\sigma/\sqrt{|\mathbb{K}|} > 3.51$ then $\frac{1+\nu}{1-\nu} < \frac{3}{2}$ and 
$
\Pr_{\mathbf{z} \sim \mathcal{DG}(\bm{\mu}, \sqrt{n}\sigma\mathbf{I}/\sqrt{|\mathbb{K}|})}[||\mathbf{z} - \bm{\mu}||_{\infty} \geq t \sqrt{n} \sigma]  
 \leq 3 n e^{-|\mathbb{K}|t^2/2}
$
after applying the union bound.
For example, if $m=2^{\lceil \log_2(12 \sqrt{n}\sigma |\mathbb{K}|)\rceil}$ (i.e., $t=12$) then the probability that $||\mathbf{z}_k -\bm{\mu}_k||_{\infty}$ cannot be bounded by $12 \sqrt{n} \sigma$ per client is less than $2^{-80}$ even if $|\mathbb{K}|=1$ and $n=10^7$. Thus, a client needs to transfer $n\cdot \log_2\left( 2^{\lceil \log_2((12 \sqrt{n} \sigma + \max_k||\bm{\mu}_k||_{\infty}) |\mathbb{K}|)\rceil} \right)$ bits in total to the aggregator. For example, if $|\mathbb{K}|=100$, $\max_k||\bm{\mu}_k||_{\infty}= 1$, $\sigma = 1$ (i.e., $\varepsilon \approx 0.2$), then $\log_2 m =22$. By contrast, if  noise was generated from the continuous domain, then $\log_2 m = 32$ 
which means that DGM reduces the communication overhead by roughly 32\%. 

Notice that if $\varepsilon$ or $\delta$ is smaller (i.e., there is stronger privacy guarantee), then $\sigma$ is larger which implies that $m$ also increases, and hence more bits need to be transferred to the server per parameter. This results in a trade-off between Differential Privacy and bandwidth efficiency.


\begin{algorithm}[t]
\small
		\caption{DP-SignFed: Federated Learning with Client Privacy \label{alg:sign_client_fed_learn}}
	\DontPrintSemicolon
	{\bf Server:}\;
	\Indp Initialize common model $w_0$\;
	\For {$t=1$ \KwTo $\Tcl$}
	{
	    Select $\mathbb{K}$ clients randomly \;
		\For {each client $k$ \textrm{in} $\mathbb{K}$}
		{	
			$\Delta \mathbf{w}_t^k = \mathbf{Client}_k(\mbf{w}_{t-1})$\;
		}
		$\mbf{w}_{t} = \mbf{w}_{t-1} + \gamma \cdot \mathsf{sign}\left(\sum_{k} \Delta \mathbf{w}_t^k\right) $\;
	}
    \Indm {\bf $\mathbf{Client}_{k}(\mbf{w})$:}\;
    \Indp
	$\mbf{w}_{t-1}^k = \mbf{w}$\;
	$\Delta \mbf{w}_t^k = \mathbf{SGD}(D_k, \mbf{w}_t^{k-1}, \Tgd) - \mbf{w}_{t-1}^k$\;
    \KwOut{$\mathsf{Enc}_{K_k}\left(\mathcal{DG}\left(\mathsf{sign}\left(\Delta \mbf{w}_t^k\right), \sqrt{n} \mathbf{I}\sigma /\sqrt{|K|}\right)\right)$}
\end{algorithm}

\subsection{Performance evaluation}
\label{sec:eval}

The performance of DP-SignFed is compared with DP-StdFed in  Table \ref{tab:dp_acc_mnist} and \ref{tab:dp_acc_mnist_fashion}. DP-StdFed is an extension of StdFed to provide client-level differential privacy. Specifically, in DP-StdFed, the randomly selected clients first clip their model update vector to have a bounded $L_2$-norm\footnote{The sensitivity $S=\sqrt{n}$ and $\gamma=0.005$ for DP-FedSign. For DP-FedStd, the server computes the median $L_2$-norm value over $N$ $L_2$-norm values received during an additional initialization round. Hence, $S$ is set to 1.73 and 2.15 for MNIST and Fashion-MNIST, respectively.}, add continuous Gaussian noise to the clipped update vector, and then transfer the \emph{non-quantized} noisy model update to the server (see Alg.~5 in the Appendix of \cite{Our_paper} for more details).
The configurations of these protocols are summarized in Table 17 of \cite{Our_paper}.

Table \ref{tab:dp_acc_mnist} and \ref{tab:dp_acc_mnist_fashion} show the best model accuracy observed over 200 rounds with each algorithm on the MNIST and Fashion-MNIST datasets, respectively. 
DP-StdFed provides the best accuracy; for MNIST, it is 86-93\%, and for Fashion-MNIST, it is 63-78\% depending on the privacy parameter $\varepsilon$. 
The performance degradation of DP-SignFed compared to DP-StdFed is 0.02 on MNIST and 0-0.07 on Fashion-MNIST. 
As expected, weaker privacy requirement (i.e., larger $\varepsilon$) needs smaller noise magnitude and hence better accuracy for all algorithms.

The communication cost of DP-SignFed is 66\% of that of DP-StdFed. Specifically, while DP-StdFed needs 32 bits per parameter, DP-SignFed requires 21-22 bits depending on the value of $\varepsilon$\footnote{It is computed from $\log_2\left( 2^{\lceil \log_2((12 \sqrt{n} \sigma + \max_k||\bm{\mu}_k||_{\infty}) |\mathbb{K}|)\rceil} \right)$ where $\sigma$ is obtained from $\varepsilon$ and $\delta=10^{-5}$ using the moments accountant. This ensures that the magnitude of the noisy update per model parameter is less than the modulus $n$ with probability at most $2^{-80}$ (see Section \ref{sec:com_cost}).}. If $\varepsilon$ is smaller, the variance of the noise is larger, and hence more bits are necessary to encode the noisy signs. Notice that even if we use early stopping and record the best accuracy sooner than 200 rounds, it will neither decrease the communication cost nor increase the privacy guarantee. 
Indeed, one has to execute all the 200 rounds in the first place to identify the best performing round.

%
%
%
%

\begin{table}[!h]
	\caption{Model accuracy and communication cost on MNIST dataset. We give the communication cost per parameter value (bits/parameter) for any value of $\varepsilon$.}
	\label{tab:dp_acc_mnist}
	\centering
	\begin{tabular}{c|c|c|c|c|c|c}
		\hline
		 & \multicolumn{2}{c|}{$\varepsilon=1$}  & \multicolumn{2}{c|}{$\varepsilon=2$}   & \multicolumn{2}{c}{$\varepsilon=4$}\\
		\cline{2-7}
		
		&  Acc & Cost &  Acc & Cost &  Acc & Cost \\
		\cline{2-7}
		DP-StdFed	                & 	0.86 & 32 	    &      0.92 & 32     &  0.93 & 32     \\
		DP-SignFed      	        &   0.87 & 22 	   &  0.90  & 21     &  0.91 & 21   \\
		\hline
	\end{tabular}
\end{table}

\begin{table}[!h]
	\caption{Model accuracy and communication cost with  Fashion-MNIST dataset. We give the communication cost per parameter value (bits/parameter) for any value of $\varepsilon$.}
	\label{tab:dp_acc_mnist_fashion}
	\centering
	\begin{tabular}{c|c|c|c|c|c|c}
		\hline
		 & \multicolumn{2}{c|}{$\varepsilon=1$}  & \multicolumn{2}{c|}{$\varepsilon=2$}   & \multicolumn{2}{c}{$\varepsilon=4$}\\
		\cline{2-7}
		
		&  Acc & Cost &  Acc & Cost &  Acc & Cost \\
		\cline{2-7}
		DP-StdFed		            & 0.63 & 32 	                    &    0.74 & 32      &  0.78 & 32 \\
		DP-SignFed 	                & 0.63 & 22                        &    0.70 & 21      &  0.73 & 21 \\
		\hline
	\end{tabular}
\end{table}


\section{Security Analysis}
\label{sec:security}


This section evaluates the robustness of SignFed, DP-SignFed and DP-StdFed against several state-of-the-art security attacks.

\subsection{Security Model}
\label{sec:sec_attacks}
\noindent \textbf{Adversarial model:} 
In this work, we assume that the adversary controls a certain fraction of the participating entities/clients at each round of the training, which means it can 
access and modify these clients' training data as well as all parameters of their local model. We, however, assume that \emph{the server is honest} (i.e., it
does not manipulate the aggregate or the update vector sent by any client).
The set of all malicious nodes is denoted by $\mathbb{M}$.

We consider two types of adversary. The first one aims at degrading the overall model performance (i.e., increase the average misclassification rate). The 
second one aims at causing targeted misclassification on some particular classes of samples by injecting backdoors into the model during the training phase. 
These adversaries are \emph{active} in the sense that they may not follow the learning protocol  faithfully. 
\smallskip

Next, we detail the attacks considered in our work.

\subsubsection{Overall Model Degradation Attacks}

\paragraph{Random Update Attack}
In this attack, malicious clients, whose numbers might vary as shown later, use random updates.
More specifically, instead of the true model update $\Delta \mbf{w}_t^k$,
each malicious client $k$ generates a random update $\Delta \hat{\mbf{w}}_t^k$ in all time slots $t$ \cite{EPFL_attack}, where $\Delta \hat{\mbf{w}}_t^k$ is drawn from an isotropic Gaussian distribution $\mathcal{G}(0, \sigma_{\Adv} \mathbf{I})$ with mean zero and variance $\sigma_{\Adv}^2$. 
Each malicious party selects the noise independently (i.e., they do not collude).

\paragraph{Gradient Ascent Attack} 
In this attack, malicious clients aim at maximizing the loss by performing gradient \emph{ascent} instead of descent on their own training data. In particular, \emph{every} malicious client $k \in \mathbb{M}$ updates the model parameters locally as 
$
\mbf{w}^k_{\ell} = \mbf{w}^k_{\ell-1} + \eta_{\Adv} \nabla f(\cup_{k\in \mathbb{M}} D_k;\mbf{w})
$,
where $\eta_{\Adv}$ is set in order to suppress the updates of honest clients and to maximize the impact of their own update on the common model. Notice that this attack assumes \emph{colluding malicious clients} (i.e., every malicious client sends exactly the same update computed on the union of their training data).
This attack attempts to maximize the \emph{average} misclassification rate of the common model, and is more effective if the number of malicious parties is large, or the training data of the malicious and benign nodes come from similar distributions. 

We note that Gradient Ascent Attack is equivalent to the Sign Inversion Attack for SignFed, described in \cite{SIGNSGD_vote_robustness}, if $\Tgd = 1$ (i.e., each client computes its update using a single mini-batch in every round).
In Sign Inversion Attack, all malicious clients faithfully compute the sign of their model update, but then send the \emph{inverted} signs to the server for aggregation.  

\subsubsection{Backdoor Attacks (Targeted Attacks)}
The goal of these attacks is to selectively degrade the accuracy of the common model with respect to only a few tasks. As opposed to the overall model degradation attacks, they generate \emph{targeted} misclassification  while preserving the model convergence as well as a high average prediction accuracy except, of course, for the targeted tasks, called backdoor classes.

We distinguish two types of backdoors: In-backdoors and Out-backdoors.

\begin{itemize}

\item[-]{\it In-backdoor Attacks:} In-backdoor attacks \cite{IBM_attack} are created for a class of samples that exists in the training data of some parties. Specifically, for some training samples $D_{\mathit{aux}} \subseteq D_k$, each adversary uses output labels that are different from their true labels. Let $y'$ denote the adversarially chosen label for a training sample $(x, y) \in D_{aux}$, and $D_{aux}'$ denotes the set of all relabelled samples (i.e., $(x, y') \in D_{aux}'$). The new objective is to minimize the loss $f((D_k \setminus D_{aux}) \cup D_{aux}'; \mbf{w})$. 

\item[-]{\it Out-backdoor Attacks:} As opposed to in-backdoors, \emph{out-backdoors} are created from samples that do \emph{not} exist in the training data of any honest clients and are relabelled to have a class that \emph{does} exist in their training data. 
Specifically, let $L$ denote the set of labels that exist in $D = \cup_k D_k$. The adversary creates $D_{aux}''$ such that, for each $(x, y) \in D_{aux}''$, $x \notin D$ and $y \in L$. The new objective is to minimize the loss $f(D_k \cup D_{aux}''; \mbf{w})$. 

\end{itemize}

To illustrate the difference between in- and out-backdoors, consider a model which recognizes dogs and rabbits in the input photos. If the adversary relabels all photos of dogs as 'rabbit' in its training data, then it is an in-backdoor attack. However, if the adversary adds new photos of frogs to its training data and relabels them as 'dog', then this is an out-backdoor attack. 

Out-backdoors are more difficult to detect than in-backdoors as they can come from a much larger set of samples, which are potentially unknown to the protocol participants.  
Hence, out-backdoors are especially severe in security-related applications such as in access control. 

As per \cite{IBM_attack}, the adversary also uses explicit boosting to outbalance the combined effect of benign model updates.
For both in- and out-backdoors, the adversary boosts $\Delta \mbf{w}_{\Adv}^{t}$ at time $t$ by sending $\eta_{\Adv} \Delta \mbf{w}_{\Adv}^{t}$  ($\mu > 1$) in order to suppress the model updates of benign parties. Importantly, $\eta_{\Adv}$ should be large enough in order to achieve misclassification of the backdoor class but also small enough to ensure the convergence of the common model and hence hide the attack.


\subsection{SignFed Security Analysis}
\label{Sec:sec}

In this section, we evaluate the robustness of SignFed against the security attacks presented previously.  


For the Overall Model Degradation attacks, different percentage of malicious nodes are considered, the MNIST and IMDB datasets were used, and the same experimental setting as defined in Table \ref{tab:fl_sign_param} is used. The boosting parameter $\eta_{\Adv}$ of the Gradient Ascent Attack is set to 10 with MNIST dataset and 20 with the IMDB dataset (we use the boosting only with StdFed). We also do not need to use boosting for the Random Update Attack with StdFed as $\sigma_{\mathsf{Adv}} = 200$ generates large noise which prevents the model convergence.

For the Backdoor attacks, the MNIST and CIFAR datasets were used and the experimental setting is shown in Table \ref{tab:fl_sign_param_backdoor}. As backdoor attacks, which aim at modifying the prediction of one
particular label while maintaining the global accuracy, are more difficult to perform on binary classifiers, we switched to the CIFAR dataset with a multiclass classifier. Furthermore, similarly to \cite{IBM_attack}, we reduce the total number of clients $N$ from 1000 to 10, we use different percentages of malicious nodes: 10\%, 20\% and 40\%, and all clients report their updates to the server at each round (i.e.  $C=1.0$). The malicious nodes collude by sharing their data for the training and by sending the same update to the server.



\subsubsection{Overall Model Degradation Attacks}

\paragraph{Random  update}

Table \ref{tab:random_update_mnist} and \ref{tab:random_update_imdb} depict the best accuracy of the global model over 100 rounds according to the fraction of malicious nodes in set $\mathbb{K}$. The results show that SignFed is robust against the random update attack even if $20$\% of all nodes are malicious, while StdFed fails to converge even if $1$\% of all nodes are malicious. Indeed, with 20\% of malicious nodes, SignFed reaches an accuracy of 98\% and 86\% for the MNIST and IMDB datasets, respectively. On the contrary, StdFed fails to converge even with one malicious node at each round.  In fact, as we show in Appendix \ref{app:conv}, SignFed's convergence rate is $O\left(\frac{1}{(1-\alpha)\sqrt{CN}\Tcl}\right)$, where $\alpha$ denotes the fraction of malicious clients. 
This is in contrast to the sign inversion attack detailed in \cite{SIGNSGD_vote_robustness}, which has a convergence rate of $O\left(\frac{1}{(1-2\alpha)\sqrt{CN}\Tcl}\right)$, that is, convergence is only possible if less than half of the nodes are malicious. 

\begin{table}[!h]
	\caption{Random update attack on SignFed and StdFed with the MNIST dataset depending on the fraction of malicious nodes. $\sigma_{\mathsf{Adv}} = 200$. The table represents the best accuracy over 100 rounds. "-" means that the algorithm does not converge.}
	\label{tab:random_update_mnist}
	\centering
	\begin{tabular}{ccccc}
		\hline
		         			        & 10\%                              & 20\%                      & 40\%                  & 60\% 		             \\
		\hline
		StdFed	                & -                                 & -                         & -		                & -                      \\
		SignFed	                & 0.98		                        & 0.98                      & 0.94                  & -                  \\
		\hline
	\end{tabular}
\end{table}

\begin{table}[!h]
	\caption{Random update attack on SignFed and StdFed with the IMDB dataset depending on the fraction of malicious nodes. $\sigma_{\mathsf{Adv}} = 200$. The table represents the best accuracy over 100 rounds.  "-" means that the algorithm does not converge.}
	\label{tab:random_update_imdb}
	\centering
	\begin{tabular}{ccccc}
		\hline
		         			        & 10\%                              & 20\%                      & 40\%                  & 60\% 		             \\
		\hline
		StdFed	                & -                                 & -                         & -		                & -                      \\
		SignFed	                & 0.86		                        & 0.86                      & 0.54                  & -                  \\
		\hline
	\end{tabular}
\end{table}

\paragraph{Gradient Ascent Attack} \label{sec:overall_attack
}
Table \ref{tab:overall_mnist} and \ref{tab:overall_imdb} show the best accuracy of the global model over 100 rounds when the adversary aims to degrade the average model performance by performing gradient ascent on its own training data. With the MNIST dataset (in Table \ref{tab:overall_mnist}), SignFed reaches an accuracy of 98\% and 79\% for 20\% and 40\% of malicious nodes, respectively, while StdFed does not converge even if only 10\% of the nodes are malicious.
For IMDB dataset, SignFed reaches an accuracy of 86\% and 72\% for 10\% and 20\% of malicious nodes, respectively, while StdFed fails to converge with only 10\% of malicious nodes. Indeed, StdFed does not converge even if we have only one malicious node. The reason for this difference is that malicious nodes can scale up their update with $\eta_{\Adv}$ and hence boost its effect on the global model. However, such adversarial boosting does not work with SignFed as the trusted server accepts only the values $-1$ and $+1$ in the update vectors. Therefore, a single malicious client does not have larger impact on the global model than any other honest client. To boost its impact, the adversary can only increase the number of the malicious clients, as shown by the experimental results. 
Since Gradient Ascent is equivalent to Sign Inversion Attack if $\Tgd=1$, the convergence rate of Gradient Ascent in this restricted scenario is $O\left(\frac{1}{(1-2\alpha)\sqrt{CN}\Tcl}\right)$ as shown in \cite{SIGNSGD_vote_robustness}.


\begin{table}[h]
	\caption{Gradient ascent attack on SignFed and StdFed with the MNIST dataset. $\eta_{\Adv} = 10$. The table represents the best accuracy over 100 rounds. "-" means that the algorithm does not converge.}
	\label{tab:overall_mnist}
	\centering
	\begin{tabular}{ccccc}
		\hline
		         			        & 10\%                              & 20\%                      & 40\%                  & 60\% 		             \\
		\hline
		StdFed	                & -                              & -                         & -		                & -                      \\
		SignFed	                & 0.98		                        & 0.98                      & 0.79                  & -                      \\
		\hline
	\end{tabular}
\end{table}

\begin{table}[h]
	\caption{Gradient ascent attack on SignFed and StdFed with the IMDB dataset.  $\eta_{\Adv} = 20$. The table represents the best accuracy over 100 rounds. "-" means that the algorithm does not converge.}
	\label{tab:overall_imdb}
	\centering
	\begin{tabular}{ccccc}
		\hline
		         			        & 10\%                              & 20\%                      & 40\%                  & 60\% 		             \\
		\hline
		StdFed	                & -                              & -                      & -		                & -                      \\
		SignFed	                & 0.86		                        & 0.72                      & 0.52                  & -                      \\
		\hline
	\end{tabular}
\end{table}

\subsubsection{Backdoor attacks}


\paragraph{In-backdoor attack}

Figure \ref{fig:in_backdoor_mnist}, \ref{fig:in_backdoor_cifar} and Table~\ref{tab:in-Backdoors-MNIST}, \ref{tab:in-Backdoors-CIFAR} show the effect of  in-backdoor attacks on the MNIST and CIFAR datasets, respectively. In all experiments, there are ten clients, out of which different fraction of malicious nodes are considered.
Figure \ref{fig:in_backdoor_mnist} depicts the accuracy of the global model for MNIST, when the adversary relabels every image of digit '5' to '7' in its local dataset. The red plots show the accuracy of the global models, while the green ones display the model accuracy only for the images with label '5' (i.e., accuracy on the backdoor class). The results show that SignFed is robust as both global model accuracy and model accuracy on the specific in-backdoor class (digit 5) reach 99\% by the end of the training. 

By contrast, with StdFed, while the accuracy of the global model converges slowly to 99\%, the accuracy of the attacked model oscillates. Similar behaviour can be observed in Figure \ref{fig:in_backdoor_cifar} which plots the accuracy on CIFAR dataset, where images of airplanes are re-labelled to 'ship' in the adversary's training data. In these experiments, StdFed fails to converge on the backdoor class, and its accuracy  on CIFAR never exceeds 55\%.

The oscillation of accuracy with StdFed can be explained by the nature of gradient descent and in particular backpropagation: when the malicious client injects the backdoor, it scales its update with $\eta_{\Adv}$. In the following round, honest clients scale up their gradients on the backdoor samples (i.e., images of digit 5 in MNIST and images of airplanes in CIFAR) in order to ``fix'' the classification error on the backdoor class. In the next round, when the model is ``fixed'' (i.e., digit '5' is correctly predicted as '5' again), the adversary's gradients are increased again in order to re-inject the backdoor. This process repeats till the end of the training.
By contrast, and similarly to the overall model degradation attacks, a malicious client cannot scale up its update in SignFed as the update vectors must take value from $\{-1,1\}^n$. 

Table~\ref{tab:in-Backdoors-MNIST} shows the accuracy of the model on digit class 5 (in-backdoor class) when we consider different percentage of malicious nodes (values are chosen based on the best model accuracy over 40 rounds). The global accuracy of the model over all the classes is 99\% and the accuracy on class '5' is 99\% independently of  the number of malicious nodes and regardless whether StdFed or SignFed is used. 

As in the previous table, Table~\ref{tab:in-Backdoors-CIFAR} shows the accuracy of the model on airplane class (in-backdoor class) when we consider different number of malicious nodes (values are chosen based on the best global accuracy over 100 rounds). 
SignFed with 20\% of malicious nodes reaches a global accuracy of  84\%, and an  accuracy of 76\% on the in-backdoor class. However, StdFed with the same amount of malicious nodes reaches 80\% of global accuracy and 0\% for the airplane class. The results confirm the larger robustness of SignFed over StdFed.

\begin{figure}[h]
  \centering
  \includegraphics[scale=0.5]{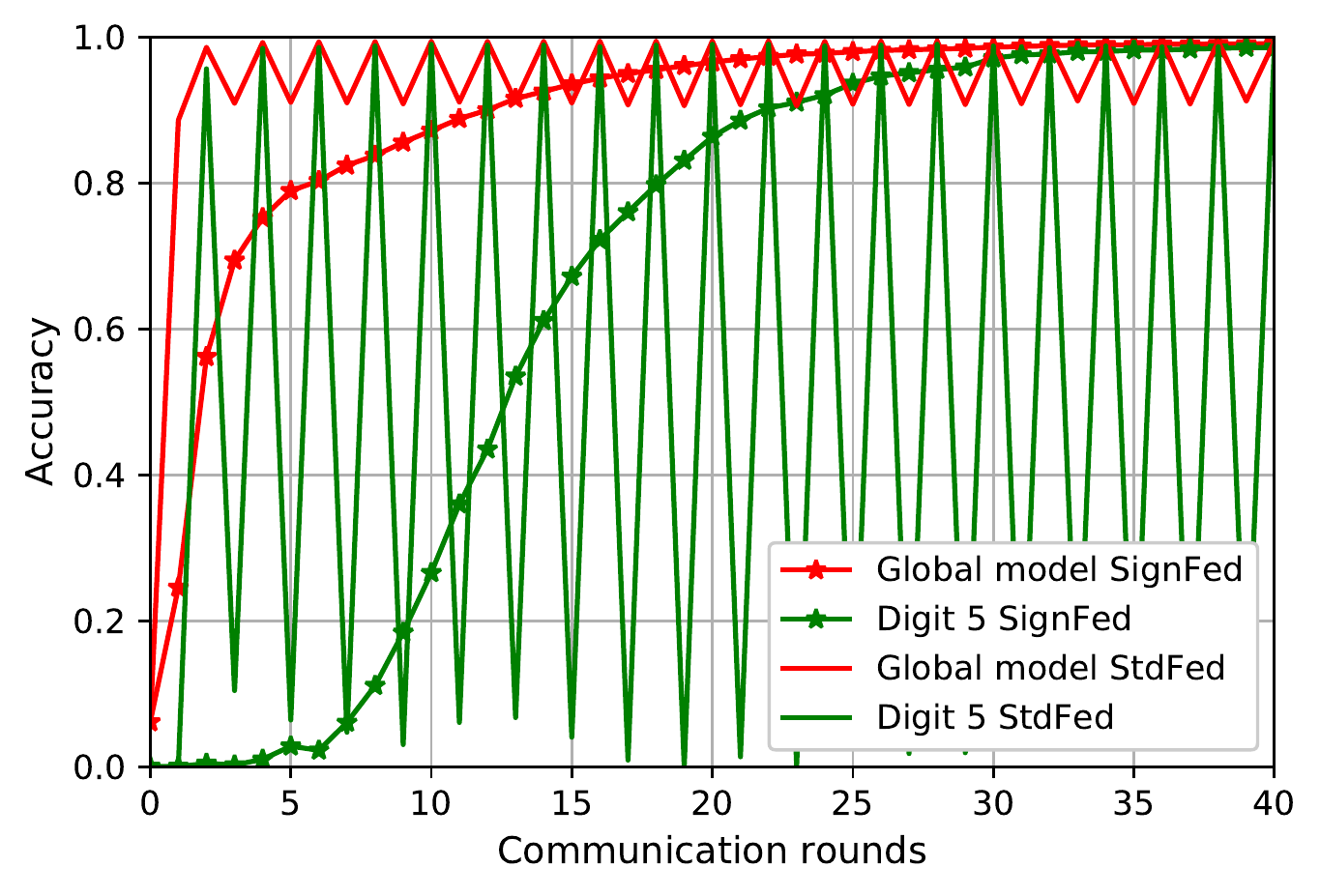}
  \caption{In-backdoor attack on SignFed and StdFed with the MNIST dataset, $\eta_{\Adv} =7$. The figure displays the global accuracy convergence and the accuracy of the label "5" which is under attack. 10\% of the nodes are malicious.}
  \label{fig:in_backdoor_mnist}
\end{figure}

\begin{figure}[h]
  \centering
  \includegraphics[scale=0.5]{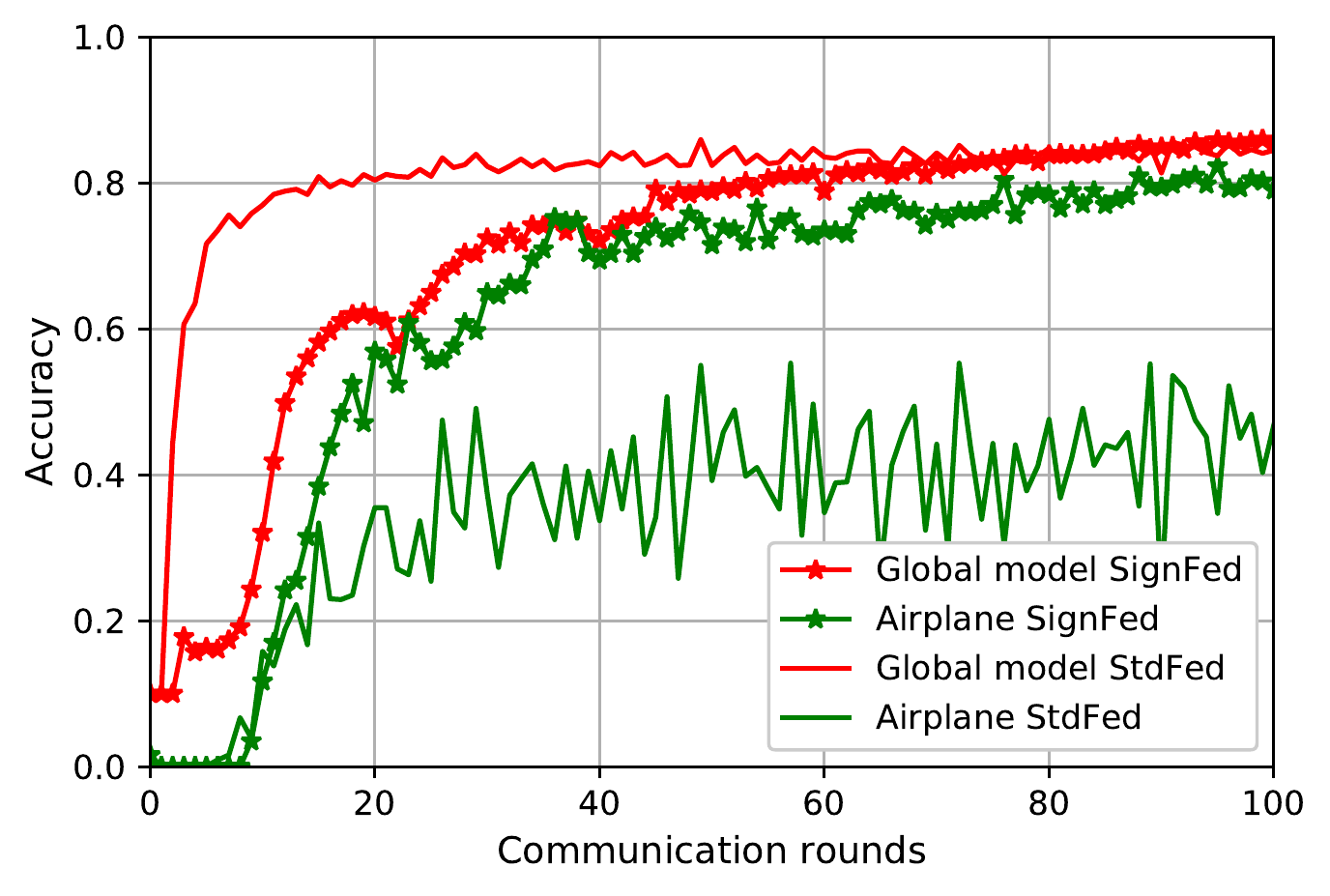}
  \caption{In-backdoor attack on SignFed and StdFed with CIFAR dataset, $\eta_{\Adv} =7$. The figure displays the global accuracy convergence and the accuracy of the label "airplane" which is under attack. 10\% of the nodes are malicious. }
  \label{fig:in_backdoor_cifar}
\end{figure}

\begin{table}[!h]
	\caption{In-backdoor attack on SignFed and StdFed with the MNIST dataset, $\eta_{\Adv}$ is set to 7, 3, 1 for 10\%, 20\%, 40\% respectively. The table depicts the global accuracy convergence and the accuracy of the label "5" which is under attack.} 
	\label{tab:in-Backdoors-MNIST}
	\centering
	\begin{tabular}{ccccc}
		\hline
		         			             &       & $10\%$  		                    & $20\%$         & $40\%$ \\
		\hline
\multirow{2}{*}{\emph{SignFed}} &		Model accuracy	                        & 0.99		    &     0.99          &    0.99        \\
	                            &	Accuracy on digit class '5'	                        &       0.99     &   0.99        &    0.99         \\

		\hline
		\hline
\multirow{2}{*}{\emph{StdFed}} &		Model accuracy	                        & 	0.99	    &     0.99          &    0.99        \\
	                            &	Accuracy on digit class '5'	                        &      0.99      &   0.99        &    0.99         \\

		\hline		

	\end{tabular}
\end{table}

\begin{table}[!h]
	\caption{In-backdoor attack on SignFed and StdFed with the CIFAR dataset, $\eta_{\Adv}$ is set to 7, 4, 2 for 10\%, 20\%, 40\% respectively. The table depicts the global accuracy convergence and the accuracy of the label "airplane" which is under attack.} 
	\label{tab:in-Backdoors-CIFAR}
	\centering
	\begin{tabular}{ccccc}
		\hline
		         			             &       & $10\%$  		                    & $20\%$         & $40\%$ \\
		\hline
\multirow{2}{*}{\emph{SignFed}} &		Model accuracy	                        & 	0.86	    &     0.84          &    0.82        \\
	                            &	Accuracy on airplane class	                        &     0.80       &   0.76        &    0.57         \\

		\hline
		\hline
\multirow{2}{*}{\emph{StdFed}} &		Model accuracy	                        & 	0.86	    &     0.80          &    0.81        \\
	                            &	Accuracy on airplane class	                        &     0.55       &   0        &    0         \\

		\hline		

	\end{tabular}
\end{table}


\paragraph{Out-backdoor attack}

The main goal of the out-backdoor attack is to introduce fake information during the training by relabeling a sample, whose true label is not a valid output of the global model. We experimented this attack on MNIST by first excluding all samples with digit '0' in all clients' training datasets. We then choose different fraction of malicious clients and relabeled the samples with '0' to '1'.  Similarly, the attack is also implemented using the CIFAR dataset by removing all airplanes from the clients' training data and relabelling all images of an airplane as 'ship' in the malicious clients' datasets \footnote{we also removed all birds and trucks from the training data, in order to limit the bias between classes.}. Note that since only malicious clients have samples from the backdoor class, the detection of this attack is quite challenging.  

Tables \ref{tab:outbackdoor_mnist} and \ref{tab:outbackdoor_cifar} display the global model accuracy as well as the model's prediction rate to misclassify the out-backdoor class to the targeted class (attack accuracy) for MNIST and CIFAR, respectively (values are chosen based on the best model accuracy over 100 rounds with MNIST and 300 rounds with CIFAR). We consider different fraction of malicious nodes. The results show that the model accuracy is similar for both datasets and schemes, but SignFed is much more robust against the attacks than StdFed. In fact, with 10\% of malicious nodes, the attack accuracy on the MNIST dataset is very low for SignFed (19\%) whereas it is quite large for StdFed (92\%). We obtained similar pattern with the CIFAR dataset although the accuracy difference is less significant (66\% versus 72\%). This can be explained by the inherent bias present in CIFAR. For example, planes are often misclassified as 'bird' or 'ship' even without the attack because of the similar background of these images (i.e., sky is very similar to sea in many images).  Indeed, the probability of predicting an airplane as a ship without the attack is 58\%, and it only increases to 66\% and 72\% with SignFed and StdFed, respectively. 

As for in-backdoor attacks, SignFed mitigates out-backdoor attacks because the adversary cannot scale up its update in order to increase its impact on the global model.

\begin{table}[h]
	\caption{Out-backdoors attack on SignFed and StdFed with the MNIST dataset. $\eta_{\mathsf{Adv}}$ is set to 1 (no boosting). The table displays the global model accuracy as well as the model’s prediction rate to misclassify the out-backdoor class "0" to the targeted class "1" (attack accuracy).}
	\label{tab:outbackdoor_mnist}
	\centering
	\begin{tabular}{ccccc}
		\hline
		         			             &       & $10\%$  		                    & $20\%$         & $40\%$ \\
		\hline
\multirow{2}{*}{\emph{SignFed}} &		Model accuracy	                        & 	0.99	    &     0.99          &    0.99        \\
	                            &	Attack accuracy	                        &        0.19    &   0.87        &    0.99         \\

		\hline
		\hline
\multirow{2}{*}{\emph{StdFed}} &		Model accuracy	                        & 	0.99	    &     0.99          &    0.99        \\
	                            &	Attack accuracy	                        &       0.92     &   0.99        &    0.99        \\

		\hline		

	\end{tabular}
\end{table}

\begin{table}[h]
	\caption{Out-backdoors attack on SignFed and StdFed with the CIFAR dataset. $\eta_{\mathsf{Adv}}$ is set to 1 (no boosting). The table displays the global model accuracy as well as the model’s prediction rate to misclassify the out-backdoor class "airplane" to the targeted class "ship" (attack accuracy).}
	\label{tab:outbackdoor_cifar}
	\centering
	\begin{tabular}{ccccc}
		\hline
		         			             &       & $10\%$  		                    & $20\%$         & $40\%$ \\
		\hline
\multirow{2}{*}{\emph{SignFed}} &		Model accuracy	                        & 	0.91	    &     0.91          &    0.92        \\
	                            &	Attack accuracy	                        &        0.66    &   0.74        &    0.93         \\

		\hline
		\hline
\multirow{2}{*}{\emph{StdFed}} &		Model accuracy	                        & 	0.92	    &     0.92          &    0.90        \\
	                            &	Attack accuracy	                        &       0.72     &   0.86        &    0.95        \\

		\hline		

	\end{tabular}
\end{table}




\subsection{DP-SignFed Security Analysis}
\label{sec:sec_dp}

Table \ref{tab:dp-in-Backdoors-MNIST} and \ref{tab:dp-in-Backdoors-Fashion-MNIST} depict the accuracy of the in-backdoor class and the misclassification rate of the out-backdoor class (best values are chosen based on the global model accuracy over 200 rounds) when backdoor attacks are launched against our DP schemes.

Malicious clients, whose fraction changes between 0.1 and 0.4, omit to add noise to their own updates at each round.
We use the configuration described in Table \ref{tab:fl_sign_param_privacy} except for $\gamma$ which is decreased to 0.001, and $\delta$ is fixed to $10^{-5}$. In Fashion-MNIST dataset, at all malicious nodes, all images of 'Sandal' are relabelled to 'Sneaker' for In-backdoor,  and all images of 'T-shirt/top' are relabelled to 'Trouser' for Out-backdoor attacks (only the malicious nodes have photos of 'T-shirt/top'). In addition, with DP-SignFed, each malicious node calculates their updates, extracts the signs ($\mathsf{sign} : \mathbb{R}^n \rightarrow \{ -1,1\}^n$) and then uses a boosting parameter $\eta_{\Adv}=5000$ to boost their updates before sending them back to the server for aggregation. Indeed, as all honest clients send the noisy update in DP-SignFed, the noise together with encryption can conceal the manipulation of the malicious update vectors.

The results show that DP-SignFed are less robust against backdoor attacks than SignFed. On the MNIST dataset, model accuracy on the in-backdoor class is 0\% for DP-SignFed regardless of the number of malicious nodes, and larger than 97\% and 95\% for SignFed, with 10\% and 20\% of malicious nodes, respectively. The same tendency holds for Fashion-MNIST. Out-backdoor attacks are especially effective on MNIST (see Table~\ref{tab:dp-out-backdoor_mnist} and Table~\ref{tab:dp-out-backdoor_fashion_mnist}); here, the misclassification rate is more than 98\% for DP-SignFed and 0-99\% for SignFed. When we consider only 2\% of malicious nodes with MNIST, the misclassification rate is 0\% for SignFed and 76\% for StdFed (without boosting) with a global model accuracy of 98\% for both schemes. Indeed, StdFed is vulnerable to the outbackdoor attack even if we have only a small number malicious node and without using any boosting. For MNIST and Fashion-MNIST, SignFed is clearly superior to DP-SignFed regarding all attacks.

Finally, random update attack and gradient ascent attack are mounted against DP-SignFed. The same parameters are used as in the previous experiments. Malicious clients still omit to add any noise to their own model updates. Instead, they boost their signs updates with DP-SignFed ($\eta_{\mathsf{Adv}} = 5000$). The model fails to converge even if only 1\% of all selected nodes are  malicious at each round.

\begin{table}[!h]
	\caption{In-backdoor attack on DP-SignFed and SignFed with MNIST dataset.  
     The table depicts the global accuracy convergence and the accuracy of the label "5" which is under attack.} 
	\label{tab:dp-in-Backdoors-MNIST}
	\centering
	\begin{tabular}{ccccc}
		\hline
		         			             &       & $10\%$  		                    & $20\%$         & $40\%$ \\
		\hline
\multirow{2}{*}{$\varepsilon=1$} &		Model accuracy                       & 	0.89	    &       0.90        &       0.90     \\
	                            &	Accuracy on digit class '5'	             &    0       &       0    &         0   \\

		\hline
		\hline
\multirow{2}{*}{\emph{$\varepsilon=2$}} &		Model accuracy	                        & 	0.89	    &     0.90 &      0.90      \\
	                            &	Accuracy on digit class '5'	                        &      0      &        0   &       0      \\

		\hline	
		
		\hline
		\hline
\multirow{2}{*}{\emph{$\varepsilon=4$}} &		Model accuracy	                        & 	0.89	    &     0.90          &   0.90         \\
	                            &	Accuracy on digit class '5'	                        &   0     &       0    &     0        \\

		\hline		
		
		\hline
		\hline
\multirow{2}{*}{\emph{SignFed}} &		Model accuracy	                        & 	0.98	    &   0.98            &   0.90         \\
	                            &	Accuracy on digit class '5'	                        &   0.97      &    0.95       &       0      \\

		\hline		

	\end{tabular}
\end{table}

\begin{table}[!h]
	\caption{In-backdoor attack on SignFed and DP-SignFed with Fashion-MNIST dataset. 
	The table depicts the global accuracy convergence and the accuracy of the label "Sandal" which is under attack. } 
	\label{tab:dp-in-Backdoors-Fashion-MNIST}
	\centering
	\begin{tabular}{ccccc}
		\hline
		         			             &       & $10\%$  		                    & $20\%$         & $40\%$ \\
		\hline
\multirow{2}{*}{$\varepsilon=1$} &		Model accuracy                       & 	0.77	    &       0.79   &        0.80    \\
	                            &	Accuracy on Sandal class	            &     0      &      0     &        0    \\

		\hline
		\hline
\multirow{2}{*}{\emph{$\varepsilon=2$}} &		Model accuracy	                        & 	0.77	    &     0.79 &      0.80      \\
	                            &	Accuracy on Sandal class	                        &      0      &      0     &     0        \\

		\hline	
		
		\hline
		\hline
\multirow{2}{*}{\emph{$\varepsilon=4$}} &		Model accuracy	                        & 	0.77	    &    0.79  &     0.80       \\
	                            &	Accuracy on Sandal class	                        &      0      &     0      &      0       \\

		\hline		
		
		\hline
		\hline
\multirow{2}{*}{\emph{SignFed}} &		Model accuracy	                        & 	0.83	    &   0.84       &      0.79      \\
	                            &	Accuracy on Sandal class	                        &   0.90       &    0.84       &        0     \\

		\hline		

	\end{tabular}
\end{table}

\begin{table}[h]
	\caption{Out-backdoors attack on DP-SignFed and SignFed with MNIST dataset. The table displays the global model accuracy as well as the model’s prediction rate to misclassify the out-backdoor class "0" to the targeted class "1" (attack accuracy).}
	\label{tab:dp-out-backdoor_mnist}
	\centering
	\begin{tabular}{cccccc}
		\hline
		         			             &               &     $2\%$                     & $10\%$  		                    & $20\%$         & $40\%$ \\
		\hline
\multirow{2}{*}{$\varepsilon=1$} &		Model accuracy	 &        0.98               & 	0.98	    &    0.99    &    0.99        \\
	                            &	Attack accuracy	      &       0.98            &      0.99      &       0.99    &       1      \\

		\hline
		\hline
\multirow{2}{*}{$\varepsilon=2$} &		Model accuracy	&        0.98                & 	0.98	    &   0.98            &  0.99          \\
	                            &	Attack accuracy	     &       0.99            &     0.98       &      0.99    &       0.99     \\
	                            
		\hline
		\hline
\multirow{2}{*}{$\varepsilon=4$} &		Model accuracy	  &      0.98                & 	0.98	    &      0.99         &     0.99       \\
	                            &	Attack accuracy	      &      0.99            &      0.99      &       0.99    &     0.99       \\
	                            
		\hline
		\hline
\multirow{2}{*}{SignFed} &		Model accuracy &   0.98	               & 		0.98    &        0.98    &   0.99         \\
	                            &	Attack accuracy	       &     0            &     0.97       &    0.99       &     0.99       \\

		\hline

	\end{tabular}
\end{table}

\begin{table}[h]
	\caption{Out-backdoors attack on DP-SignFed and FL-SIG with Fashion-MNIST dataset. The table displays the global model accuracy as well as the model’s prediction rate to misclassify the out-backdoor class "T-shirt/Top" to the targeted class "Trouser" (attack accuracy).}
	\label{tab:dp-out-backdoor_fashion_mnist}
	\centering
	\begin{tabular}{ccccc}
		\hline
		         			             &       & $10\%$  		                    & $20\%$         & $40\%$ \\
		\hline
\multirow{2}{*}{$\varepsilon=1$} &		Model accuracy	                   & 	0.87	    &    0.88           &   0.90         \\
	                            &	Attack accuracy	                        &     0.78       &    0.81       &     0.87        \\

		\hline
		\hline
\multirow{2}{*}{$\varepsilon=2$} &		Model accuracy	                     & 	0.88	    &     0.88          &   0.90         \\
	                            &	Attack accuracy	                        &   0.78         &    0.82       &       0.85     \\
	                            
		\hline
		\hline
\multirow{2}{*}{$\varepsilon=4$} &		Model accuracy	                   & 	0.87	    &      0.89         &  0.90          \\
	                            &	Attack accuracy	                        &   0.78         &     0.81      &     0.86       \\
	                            
		\hline
		\hline
\multirow{2}{*}{SignFed} &		Model accuracy	               & 	0.87	    &      0.88         &    0.90        \\
	                            &	Attack accuracy	                        &     0       &      0.12     &      0.83      \\

		\hline

	\end{tabular}
\end{table}



\section{Related work}
\label{sec:related_work}

\noindent \textbf{Security of Federated Learning:}
Federated learning being a relatively new concept, its security has not been studied so far. However, most ML security attacks also apply to it. In this section, we mostly focus on integrity attacks \cite{PapernotMSW18}. These attacks include {\em pollution} and {\em backdoor attacks.}

In pollution attacks, the adversary manipulates its training data or the corresponding labels to poison the global model. In backdoor attacks, the adversary manipulates its training data or the incoming labels in order to insert backdoors into the global model. Indeed, the goal of the adversary is to cause the misclassification of  specific labels in the global model while maintaining a good accuracy on the non-targeted labels. In the context of Federated Learning, the most efficient strategy consists of directly manipulating the model updates. 

The first backdoor attack designed for a federated learning environment was proposed in \cite{Backdoor_Fed}. Here, the adversary scales up its update in order to surpass the contributions of other honest participants after aggregation. The goal of the attack is to alter the common model so that it exhibits some adversarial behaviour (e.g., targeted misclassification). However, these attacks are effective only in later rounds, when the global model has converged. Indeed, the attack exploits the fact that when the global model has converged, the updates of other honest clients will be smaller and then are more easier to surpass. In contrast, the adversary in \cite{IBM_attack} boosts its update enough to surpass the contributions of the honest clients from the very first rounds even when the global model has not converged.

The resilience of distributed implementations of Stochastic Gradient Descent (SGD) against Byzantine failures is studied in \cite{EPFL_attack}. Each Byzantine worker (among a set of workers) sends a random vector drawn from a Gaussian distribution. The results show that only a single Byzantine worker can prevent the traditional federated schemes such as StdFed from converging. In the same paper, KRUM a Byzantine-resilient algorithm is proposed  as an aggregation rule to select one honest update per round in an adversarial environment.  

In \cite{SIGNSGD_vote_robustness}, the authors study the robustness and the tolerance of signSGD/SIGNUM \cite{SIGNSGD} with majority vote  against network faults and adversarial clients, where SIGNUM is the momentum equivalent of signSGD (i.e., each client maintains a momentum and transmits the sign momentum to the server at each iteration). 
In \cite{SIGNSGD_vote_robustness}, the authors show that signSGD is robust against sign inversion attack, when each malicious client inverts the sign of the computed gradient. The authors argue that this is the best possible attack in a \emph{non-adaptive} setting (i.e., when the adversary performs the attack independently of the gradients it computed). In this paper, we experimentally show that SignFed is also robust against other adaptive attacks like various backdoor attacks \cite{IBM_attack}. \smallskip 

\noindent \textbf{Privacy of Federated Learning:}
There exist a few inference attacks specifically designed against federated learning schemes. In \cite{Property_inference}, the adversary's goal is to infer whether records with a specific property are included in the training dataset of the other participants (called batch property inference). The authors demonstrate the attack by inferring whether black people are included in any of the training datasets, where the common model is trained for gender classification (i.e., the inferred property is independent of the learning objective). The adversary is supposed to have access to the aggregated model update of honest participants.
In \cite{NasrSH19}, the proposed attack infers if a specific person is included in the training dataset of the participants (aka, Membership inference). The adversary extracts the following features from every snapshot of the common model, which is a neural network: output value, hidden layers, loss values, and the gradient of the loss with respect to the parameters of each layer. These features are used to train a membership inference model, which is a convolutional neural network.

The concept of Client-based Differential Privacy has been introduced in \cite{Client-DP-McMahan} and \cite{Client-DP-ETH-Zurich}, where the goal is to hide any information that is specific to a single client's training data. These algorithms bound and noise the contribution of a single client's instead of a single record in the client's dataset. The noise is added by the server, hence, unlike our solution, these works assume that the server is trusted. Also, the noise is drawn from continuous distributions. \smallskip

\noindent \textbf{Bandwidth Optimization in Federated Learning:}
Different quantization methods have been proposed to save the bandwidth and reduce the communication costs in federated learning. They can be divided into two main groups: unbiased and biased methods. The unbiased approximation techniques use probabilistic quantization schemes to compress the stochastic gradient and attempt to approximate the true gradient value as much as possible \cite{QSGD}\cite{TERNGRAD}\cite{ATOMO}\cite{Quant_Fed}. 
However, biased approximations of the stochastic gradient can still guarantee convergence both in theory and practice \cite{SIGNSGD, LinHM0D18, SeideFDLY14}. In signSGD \cite{SIGNSGD}, all the clients calculate the stochastic gradient based on a single mini-batch and then send the sign vector of this gradient to the server. The server calculates the aggregated sign vector by taking the median (majority vote) and sends the signs of the aggregated signs back to each client.

The main differences between our scheme (SignFed) and signSGD are as follows:

\begin{itemize}
    \item SignFed aims to train a common model that is distributed to a random subset of all clients in every round. However, in signSGD, each client builds its own model locally and the server sends the same aggregated model update to \emph{every} client. Selecting only a random subset of clients in each round has at least three benefits. First, SignFed becomes more robust against temporary node failures. Second, SignFed reduces the communication costs upstream to the server. Finally, sampling boosts privacy due to the uncertainty that a specific user's or client's data is used for training or not.
    \item In SignFed, each client can perform multiple SGD iterations locally using multiple mini-batches before computing the model update. On the contrary, signSGD always performs one local SGD iteration with a single mini-batch at every client. This is needed to guarantee convergence since all nodes maintain different local models unlike in SignFed.
    \item As there is no single common model built in signSGD, the server only transfers the sign of the aggregated signs to the clients in every round. Therefore, only a single bit is transferred per parameter downstream to the clients. In SignFed, the whole model is transferred but only to a random subset of clients. 
 
\end{itemize}



\section{Summary and Discussion}
\label{sec:discussion}

  

We can make the following main observations.
\begin{enumerate}
    \item SignFed is almost as accurate as StdFed but incurs less communication overhead and has better resiliency against both security and privacy attacks (see Section \ref{sec:fl_sign_per} and \ref{Sec:sec}).
    \item Although SignFed is more robust against state-of-the-art privacy attacks than StdFed, DP-SignFed provides provable privacy guarantees unlike SignFed. However, it also produces models with slightly worse accuracy than SignFed. More importantly, it is less robust against security attacks (see Section \ref{sec:eval} and \ref{sec:sec_dp}).
    
    \item DP-SignFed has 30-40\% less communication cost than StdFed but it is roughly 20 times more than that of SignFed. In DP-SignFed, there is a trade-off between privacy and bandwidth. Stronger privacy requires to increase the variance of the added discrete Gaussian noise which in turn  implies larger communication costs (see Section \ref{sec:fl_client_dp}).
    \item The convergence rates of SignFed and DP-SignFed are $O\left( \frac{1}{\sqrt{\Tcl C N}} \right)$ and $O\left(\frac{1}{\sqrt{\Tcl C N}} + \frac{n^{3/2}\sigma}{\sqrt{\Tcl} C N}\right)$, respectively, supposing that $\gamma = O(1/\sqrt{\Tcl})$, $\Tgd = 1$, $|\mathbb{B}| = \Tcl$ (see Appendix \ref{app:conv} for the proofs). Therefore, the ``cost of privacy'' in convergence rate is $O\left(\frac{n^{3/2}\sigma}{\sqrt{\Tcl} C N}\right)$ which is due to the added noise.
    
\end{enumerate}



Seemingly, there is a possible trade-off between differential privacy and robustness against security attacks. One possible explanation is that differential privacy requires to randomize every value of the update vector so much that their aggregates become easier to manipulate.  As malicious clients omit to add any noise to their own model updates, the attacked DP protocols essentially turn into Random Update Attacks, where honest clients send almost uniformly random signs and malicious clients transfer non-noisy, boosted updates to the aggregator. As SignFed converges with Random Update Attack even with limited number of honest nodes, the malicious nodes in DP-SignFed can also degrade model performance or inject backdoors for the very same reason.       
The smaller $\varepsilon$ is the more uniform every coordinate's distribution will be, and the larger impact a malicious client has on the aggregate.

\bibliographystyle{ACM-Reference-Format}
\bibliography{references}

\appendix

\section{Appendix}

\subsection{Proof of Lemma \ref{lem:cont_app}}
\label{app:proof_lem_cont_app}

\begin{proof}
We first show that $Z/\sqrt{2\pi}\xi \leq  1 + \frac{2e^{-2\pi^2\xi^2}}{1-e^{-6\pi^2\xi^2}}$, where $Z = \sum_{x \in \mathbb{Z}} \exp(- (x-\mu)^2/2\xi^2)$,
which implies the upper bound.
From \cite{SZABLOWSKI2001289}, $Z = \sqrt{2\pi}\xi \vartheta_3(\pi \mu, \exp(-2\pi^2\xi^2))$, where $\vartheta_3(u, r) = 1 + 2\sum_{i \geq 1} r^{i^2} \cos(2iu)$ is a Jacobi Theta function. Then,
\begin{align*}
1 + 2\sum_{i \geq 1} r^{i^2} \cos(2iu) &\leq 1 + 2 r\sum_{i \geq 0} r^{3i} \\
& \leq 1 + \frac{2r}{1-r^3} 
\end{align*}
if $|r|<1$. The lower bound can be derived similarly using the fact that $\cos(2iu) \geq -1$.
\end{proof}

\subsection{Proof of Theorem \ref{thm:dg_privacy}}
\label{app:proof_thm_dg_privacy}

\begin{proof}
Without loss of generality, suppose that $\mathcal{DG}_{\xi}: \mathbb{R} \rightarrow \mathbb{Z}^n$.

We apply the moments accountant \cite{Abadi} and show that $\alpha_{\mathcal{DG}}(\lambda)$ can be upper bounded efficiently without evaluating the pmf of $\mathcal{DG}$.

Let $\eta_0^{\mathcal{DG}}(x|\xi) =  \mathsf{pmf}_{\mathcal{DG}(0, \xi)}(x)$ and $\eta_1^{\mathcal{DG}}(x|\xi) =  (1-q) \mathsf{pmf}_{\mathcal{DG}(0, \xi)}(x) + q \mathsf{pmf}_{\mathcal{DG}(1, \xi)}(x)$ where $\mathsf{pmf}_{\mathcal{DG}(\mu, \xi)}(x) = \Pr_{x \sim \mathcal{DG}(\mu, \xi) }[x]$.
Then,
\begin{align*}
\alpha_{\mathcal{DG}}(\lambda) &= \log\max(E_1'(\lambda, \xi), E_2'(\lambda, \xi)) 
\end{align*}

where
\begin{align*}
E_1'(\lambda,  \xi) &=  \sum_{x = -\infty}^{\infty}\eta_0^{\mathcal{DG}}(x|\xi) \cdot \left(\frac{\eta_0^{\mathcal{DG}}(x|\xi)}{\eta_1^{\mathcal{DG}}(x|\xi)}\right)^{\lambda} \\
& \leq  \frac{(1 + \kappa(\xi))^{\lambda}}{(1 - \kappa(\xi))^{\lambda+1}} \sum_{x = -\infty}^{\infty}\eta_0^{\mathcal{G}}(x|\xi) \cdot \left(\frac{\eta_0^{\mathcal{G}}(x|\xi)}{\eta_1^{\mathcal{G}}(x|\xi)}\right)^{\lambda} \\ 
& \leq  \frac{(1 + \kappa(\xi))^{\lambda}}{(1 - \kappa(\xi))^{\lambda+1}} \int_{\mathbb{R}}\eta_0^{\mathcal{G}}(y|\xi) \cdot \left(\frac{\eta_0^{\mathcal{G}}(y|\xi)}{\eta_1^{\mathcal{G}}(y|\xi)}\right)^{\lambda} dy\\
&= \frac{(1 + \kappa(\xi))^{\lambda}}{(1 - \kappa(\xi))^{\lambda+1}} E_1(\lambda,  \xi)
\end{align*}
where the first inequality follows from Lemma \ref{lem:cont_app}.
Using a similar reasoning we obtain that 
\begin{align*}
E_2'(\lambda,  \xi) &\leq \frac{(1 + \kappa(\xi))^{\lambda}}{(1 - \kappa(\xi))^{\lambda+1}} E_2(\lambda,  \xi)
\end{align*}
The theorem follows from Theorem 2 in \cite{Abadi}.
\end{proof}

\subsection{Proof of Theorem \ref{thm:fl_client}}
\label{app:proof_thm_fl_client}

As opposed to the continuous case, the sum of discrete Gaussian random variables $\sum_i \mathcal{DG}(\mu_i, \xi_i)$ does not follow the distribution of $\mathcal{DG}(\sum_i \mu_i, \sqrt{\sum_i \xi_i^2})$,  though it is very close to that if $\xi_i$ is sufficiently large.   The exact difference is quantified by the following Lemma from \cite{Micciancio017}. 

\begin{lemma}[\cite{Micciancio017}, Theorem 2.1]
\label{lem:dgs_dist}
If $\xi_i \geq\sqrt{\ln(2+2/\nu)}/\pi$ and $\bm{\mu}_i \in \mathbb{R}^n$, then 
$$
 \frac{1-\nu}{1+\nu} \leq \frac{\Pr_{\mathbf{z} \sim \sum_i \mathcal{DG}(\bm{\mu}_i, \xi_i)}[\mathbf{z}]}{\Pr_{\mathbf{z} \sim \mathcal{DG}\left(\sum_i \bm{\mu}_i, \sqrt{\sum_i \xi_i^2}\right)}[\mathbf{z}]} \leq \frac{1+\nu}{1-\nu}
$$
for any $\mathbf{z} \in \mathbb{Z}^n$
\end{lemma}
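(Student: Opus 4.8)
The statement is the discrete‑Gaussian convolution theorem of \cite{Micciancio017}, so the plan is to reconstruct its proof through the smoothing parameter of an auxiliary integer lattice. The guiding idea is that forming $\sum_i X_i$ with $X_i \sim \mathcal{DG}(\bm{\mu}_i,\xi_i)$ amounts to summing a single multivariate Gaussian mass over a coset of a ``sum‑to‑zero'' sublattice, and that the smoothing lemma makes that mass essentially independent of which coset we land in. Concretely, writing $\rho_{\xi}(\mathbf{v})=\exp(-\|\mathbf{v}\|^2/2\xi^2)$, independence gives $\Pr_{\mathbf{z}\sim\sum_i X_i}[\mathbf{z}] \propto \sum_{\mathbf{x}_1,\dots,\mathbf{x}_m:\,\sum_i \mathbf{x}_i=\mathbf{z}} \prod_i \rho_{\xi_i}(\mathbf{x}_i-\bm{\mu}_i)$. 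The constraint $\sum_i\mathbf{x}_i=\mathbf{z}$ describes a coset of the kernel sublattice $L=\{(\mathbf{x}_1,\dots,\mathbf{x}_m)\in(\mathbb{Z}^n)^m:\sum_i\mathbf{x}_i=\mathbf{0}\}$, so the inner sum is a lattice Gaussian mass $\rho_{\bm{\Sigma}}(L+\mathbf{t}(\mathbf{z}))$ of the block‑diagonal covariance $\bm{\Sigma}=\mathrm{diag}(\xi_1^2\mathbf{I},\dots,\xi_m^2\mathbf{I})$ over a coset fixed by any particular solution $\mathbf{t}(\mathbf{z})$.

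The next step is to complete the square, exactly as in the continuous convolution identity. The quadratic form in the exponent splits into a part depending only on $\mathbf{z}$ — which is precisely the shape $\exp(-\|\mathbf{z}-\sum_i\bm{\mu}_i\|^2/2\sum_i\xi_i^2)$ of the target $\mathcal{DG}(\sum_i\bm{\mu}_i,\sqrt{\sum_i\xi_i^2})$ — multiplied by the residual lattice sum over $L$. This factorization is exact, so all of the approximation error is confined to the residual factor, and the ratio of the two unnormalized pmfs at $\mathbf{z}$ equals a $\mathbf{z}$‑independent constant times $\rho_{\bm{\Sigma}}(L+\mathbf{t}(\mathbf{z}))$.

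The crux is then the smoothing lemma for $L$. When the effective Gaussian width on $L$ exceeds the smoothing parameter $\eta_\nu(L)$, Poisson summation over the dual $L^{*}$ shows that $\rho_{\bm{\Sigma}}(L+\mathbf{t})$ deviates from $\rho_{\bm{\Sigma}}(L)$ by at most a multiplicative $[\tfrac{1-\nu}{1+\nu},1]$ factor, uniformly in the shift $\mathbf{t}$: the zero term of the dual reproduces $\rho_{\bm{\Sigma}}(L)$, while the remaining terms are bounded in absolute value by the dual Gaussian mass of $L^{*}\setminus\{\mathbf{0}\}$, which is at most $\nu$. Hence the pmf ratio lies in a band of multiplicative width $\tfrac{1+\nu}{1-\nu}$; and since both pmfs are probability distributions summing to $1$ over $\mathbf{z}\in\mathbb{Z}^n$, the weighted average of the ratio is exactly $1$, which pins the band to $[\tfrac{1-\nu}{1+\nu},\tfrac{1+\nu}{1-\nu}]$ and gives the claim. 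The final ingredient is to verify that the hypothesis $\xi_i\ge\sqrt{\ln(2+2/\nu)}/\pi$ forces the width to exceed $\eta_\nu(L)$, which reduces to bounding $\eta_\nu(L)$ in terms of the per‑block widths $\xi_i$ together with the elementary estimate $\eta_\nu(\mathbb{Z})\le\sqrt{\ln(2+2/\nu)/\pi}$.

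I expect the main obstacle to be precisely this smoothing step and the threshold verification: proving the uniform‑in‑shift bound by Poisson summation on the skew sublattice $L$, and showing that a scalar condition on each $\xi_i$ controls $\eta_\nu(L)$ without incurring a dimension‑dependent blow‑up in the final factor. By contrast, completing the square and the normalization argument that localizes the ratio around $1$ are routine.
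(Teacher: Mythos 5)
The paper never proves this lemma: it is imported verbatim, with citation, as Theorem~2.1 of \cite{Micciancio017}, so there is no in-paper proof to compare yours against. Your reconstruction --- expressing the law of $\sum_i \mathcal{DG}(\bm{\mu}_i,\xi_i)$ as a Gaussian mass over a coset of the kernel lattice $L=\{(\mathbf{x}_1,\dots,\mathbf{x}_m)\in(\mathbb{Z}^n)^m : \sum_i \mathbf{x}_i=\mathbf{0}\}$, completing the square so the $\mathbf{z}$-dependent factor is exactly the unnormalized target pmf, making the residual coset mass shift-independent up to a $\left[\frac{1-\nu}{1+\nu},1\right]$ factor by Poisson summation over $L^{*}$ once the width exceeds $\eta_\nu(L)$, and then pinning the normalizing constant to the same band because both pmfs sum to one --- is the standard proof of discrete-Gaussian convolution theorems and is essentially the route the cited source (following the Peikert and Micciancio--Peikert convolution lemmas) takes; the sketch is sound. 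The one substantive caveat is the threshold step you yourself flag: a per-component scalar hypothesis $\xi_i\ge\sqrt{\ln(2+2/\nu)}/\pi$ smooths each coordinate at level $\nu$, so over $\mathbb{Z}^n$ the coset-mass control degrades multiplicatively across the $(m-1)n$ coordinates of $L$ (dual mass of order $(m-1)n\nu$ rather than $\nu$), and a dimension-free band $\left[\frac{1-\nu}{1+\nu},\frac{1+\nu}{1-\nu}\right]$ then really corresponds to the scalar ($n=1$) setting of the original theorem. That looseness, however, sits in the paper's transcription of the statement to $\mathbb{Z}^n$ with a dimension-independent constant, not in your argument, which correctly identifies exactly where the bookkeeping must be done.
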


Intuitively, if $Z = \sum_{x \in \mathbb{Z}} \exp(- (x-\mu)^2/2\xi_i^2) \approx \sqrt{2\pi}\xi_i$ then
$\mathcal{DG}(\mu_i, \xi_i) \approx \mathcal{G}(\mu_i, \xi_i)$, in which case $\sum_i \mathcal{DG}(\mu_i, \xi_i) \approx \sum_i \mathcal{G}(\mu_i, \xi_i) =  \mathcal{G}(\mu_i, \sum_i\xi_i) \approx \mathcal{DG}(\mu_i, \sum_i\xi_i)$, which follows from Lemma \ref{thm:fl_client}. Indeed, it also follows from the proof of Lemma \ref{lem:cont_app} that $Z/\sqrt{2\pi}\xi_i \leq \vartheta_3(\pi \mu, \exp(-2\pi^2\xi_i^2)) \leq 1 + \frac{2e^{-2\pi^2\xi_i^2}}{1-e^{-6\pi^2\xi_i^2}} \leq 1 + \frac{2}{\exp(2\xi_i^2 \pi^2) -1} \leq  1 + \frac{2}{\exp(\xi_i^2 \pi^2) -2} \leq 1 + \nu$ which provides some insight into the condition on $\xi_i$ in Lemma \ref{lem:dgs_dist}.

For example, $\nu < 10^{-4}$ is satisfied if $\xi_i > 1$.

Let $\widehat{\mathcal{DG}}_{\xi}$ denote the distributed Gaussian mechanism which returns $\sum_{k=1}^N \mathcal{DG}(\bm{\mu}_k, \xi/\sqrt{|\mathbb{K}|}) $ where $\bm{\mu}_k \in \mathbb{R}^n$. 
The next lemma, which directly follows from Theorem \ref{thm:dg_privacy} and Lemma \ref{lem:dgs_dist},  implies Theorem \ref{thm:fl_client}.

\begin{lemma}
\label{lem:ddg_privacy}
If $\xi \geq \sqrt{|\mathbb{K}|\ln(2+2/\nu)}/\pi$, then
$
\alpha_{\widehat{\mathcal{DG}}}(\lambda|q) \leq  \alpha_\mathcal{G}(\lambda | q) + 
\log \left(\frac{(1 + \kappa(\xi))^{\lambda}}{(1 - \kappa(\xi))^{\lambda+1}} \left(\frac{1+\nu}{1-\nu}\right)^3\right)$.
\end{lemma}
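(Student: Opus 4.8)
The plan is to chain the two approximations already established in the excerpt: first pass from the \emph{distributed} discrete Gaussian $\widehat{\mathcal{DG}}_\xi$ to a \emph{single} aggregated discrete Gaussian via Lemma \ref{lem:dgs_dist}, and then from that single discrete Gaussian to the continuous Gaussian mechanism via Theorem \ref{thm:dg_privacy}, collecting the multiplicative error factors contributed at each stage.

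First I would write $\alpha_{\widehat{\mathcal{DG}}}(\lambda\mid q) = \log\max(\hat E_1, \hat E_2)$, mirroring the definitions of $\alpha_{\mathcal{G}}$ and $\alpha_{\mathcal{DG}}$, where the subsampled moment sums $\hat E_1,\hat E_2$ are built from the output pmf of $\widehat{\mathcal{DG}}_\xi$, i.e.\ from $\eta_0^{\widehat{\mathcal{DG}}}(\mathbf z)=\Pr[\sum_k \mathcal{DG}(\bm\mu_k,\xi/\sqrt{|\mathbb{K}|})=\mathbf z]$ and its subsampled mixture $\eta_1^{\widehat{\mathcal{DG}}}=(1-q)\eta_0^{\widehat{\mathcal{DG}}}+q\,\eta_{+1}^{\widehat{\mathcal{DG}}}$. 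The key observation is that the hypothesis $\xi \geq \sqrt{|\mathbb{K}|\ln(2+2/\nu)}/\pi$ is exactly the condition $\xi/\sqrt{|\mathbb{K}|}\geq \sqrt{\ln(2+2/\nu)}/\pi$ required to invoke Lemma \ref{lem:dgs_dist} on each per-client noise term, so that every output probability of $\widehat{\mathcal{DG}}_\xi$ lies within a factor $\tfrac{1\pm\nu}{1\mp\nu}$ of the corresponding probability of the single aggregated mechanism $\mathcal{DG}(\sum_k\bm\mu_k,\xi)$, whose combined standard deviation is $\xi$. Since $\eta_1^{\widehat{\mathcal{DG}}}$ is a convex combination of two such sums (differing only in one client's mean), it inherits the same two-sided bound.

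Next I would substitute these bounds into $\hat E_1$ and $\hat E_2$. Because each moment sum has the shape $\sum_{\mathbf z}\eta_0^{\lambda+1}/\eta_1^{\lambda}$, replacing every probability by its aggregated-discrete-Gaussian counterpart turns $\hat E_i$ into $E_i'$, the single-discrete-Gaussian moment at parameter $\xi$ appearing in the proof of Theorem \ref{thm:dg_privacy}, times a power of $\tfrac{1+\nu}{1-\nu}$. Applying Theorem \ref{thm:dg_privacy} to $\log\max(E_1',E_2')=\alpha_{\mathcal{DG}}(\lambda\mid q)$ then introduces the factor $\tfrac{(1+\kappa(\xi))^{\lambda}}{(1-\kappa(\xi))^{\lambda+1}}$ and produces $\alpha_{\mathcal G}(\lambda\mid q)$; taking logarithms and summing the two contributions gives the stated inequality, with $\kappa$ as defined in Lemma \ref{lem:cont_app}.

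I expect the delicate step to be the exact bookkeeping of how many factors of $\tfrac{1+\nu}{1-\nu}$ accumulate. A naive count — one factor from the outer measure $\eta_0^{\widehat{\mathcal{DG}}}$ and two from the likelihood ratio, the latter raised to the power $\lambda$ — produces an exponent of the form $2\lambda+1$ rather than the constant $3$ stated in the lemma. Reconciling this is the crux: I would either look for a sharper pointwise argument that prevents the ratio correction from being amplified by $\lambda$, or appeal to the regime of interest in which $\nu$ is negligibly small (the text already notes $\nu<10^{-4}$ once $\xi/\sqrt{|\mathbb{K}|}>1$), so that the gap between $(\tfrac{1+\nu}{1-\nu})^{2\lambda+1}$ and $(\tfrac{1+\nu}{1-\nu})^{3}$ is immaterial for the numerically computed $\varepsilon$. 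Everything else is routine substitution once Lemma \ref{lem:dgs_dist} and Theorem \ref{thm:dg_privacy} are invoked.
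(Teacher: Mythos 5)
Your route is exactly the paper's intended one: the paper gives no actual proof of this lemma, saying only that it ``directly follows from Theorem~\ref{thm:dg_privacy} and Lemma~\ref{lem:dgs_dist}'', and the two-stage chaining you describe (distributed discrete Gaussian to a single aggregated discrete Gaussian via Lemma~\ref{lem:dgs_dist}, then to the continuous Gaussian via Theorem~\ref{thm:dg_privacy}) is that derivation. Your use of the hypothesis is also right: $\xi \geq \sqrt{|\mathbb{K}|\ln(2+2/\nu)}/\pi$ is precisely what lets each per-client term $\mathcal{DG}(\bm{\mu}_k,\xi/\sqrt{|\mathbb{K}|})$ satisfy the condition of Lemma~\ref{lem:dgs_dist}, the aggregated standard deviation is indeed $\xi$, and the subsampled mixture $\eta_1$ inherits the two-sided bound by convexity, as you say.

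The ``delicate step'' you flag is not a defect of your argument but, as far as I can tell, an error in the paper's statement. Writing $\beta=\frac{1+\nu}{1-\nu}$, the pointwise bounds $\beta^{-1}\eta_i^{\mathcal{DG}} \leq \eta_i^{\widehat{\mathcal{DG}}} \leq \beta\,\eta_i^{\mathcal{DG}}$ for $i\in\{0,1\}$, substituted into $\hat{E}_1=\sum_{\mathbf{z}}(\eta_0^{\widehat{\mathcal{DG}}})^{\lambda+1}/(\eta_1^{\widehat{\mathcal{DG}}})^{\lambda}$, give $\hat{E}_1\leq\beta^{2\lambda+1}E_1'$ (one factor from the outer measure, $2\lambda$ from the likelihood ratio), and likewise for $\hat{E}_2$; combined with the proof of Theorem~\ref{thm:dg_privacy} this yields the lemma with exponent $2\lambda+1$ in place of $3$. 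This amplification by $\lambda$ is unavoidable when all you have is pointwise multiplicative closeness of distributions, so your hope for a ``sharper pointwise argument'' will not materialize; notably, the paper's own proof of Theorem~\ref{thm:dg_privacy} tracks exactly this amplification for $\kappa$, producing the exponents $\lambda$ and $\lambda+1$ there. The constant $3$ reads like a count of the three probability substitutions that forgets the $\lambda$-th powers, and it coincides with the correct bound only at $\lambda=1$. Your second fallback (that $\nu<10^{-4}$ once $\xi/\sqrt{|\mathbb{K}|}>1$ makes the discrepancy numerically immaterial) is a reasonable practical remark, but be clear that it does not prove the inequality as stated; it proves the weaker $(2\lambda+1)$-version, which is the statement you should actually write down. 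That correction also propagates to Theorem~\ref{thm:fl_client}, which quotes the same factor, though since $\log\beta = O(\nu)$ the effect on the numerically computed $\varepsilon$ is small for the parameter ranges used in the paper.
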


\subsection{Convergence Proofs}
\label{app:conv}

All the proofs are simple adaptations of Theorem 2 from \cite{SIGNSGD_vote_robustness}.
Here we outline only the main deviations from the proof of that theorem.

\textbf{Assumptions:}
\begin{enumerate}
    \item \emph{Lower bound:} For all $x$ and some constant $f^*$, $f(x) \geq f^*$, where $f$ denotes the loss/objective function.
    \item \emph{Smoothness:} Let $g(x)$ denote the gradient of the objective function $f$ evaluated at $x$. Then, for all $x, y$ and some non-negative constant $\mathbf{L} = (L_1, L_2, \ldots, L_n)$, 
    $$|f(y) - [f(x) + g(x)^{\mathsf{T}}(y-x)]| \leq 1/2 \sum_{i} L_i(y_i - x_i)^2$$
    \item \emph{Variance bound:} Upon receiving query $x \in \mathbb{R}^n$, the stochastic gradient oracle gives us an independent, unbiased estimate $\hat{g}$ that has bounded variance per coordinate: $\mathbb{E}[\hat{g}(x)] = g(x)$, $\mathbb{E}[(\hat{g}(x)_i-g(x)_i)^2] \leq \tau_i^2$ for a vector of non-negative constants $\bm{\tau} = (\tau_1, \tau_2, \ldots, \tau_n)$. 
    \item \emph{Unimodal, symmetric gradient noise:} At any given point $x$, each component of the stochastic gradient vector $\hat{g}(x)$ has unimodal distribution that is also symmetric about the mean.
\end{enumerate}

Note that adding extra Gaussian noise to each gradient component for the purpose of differential privacy will not violate Assumption 4.

\begin{theorem}
If $|\mathbb{B}| = \Tcl$, $\Tgd=1$, and $\gamma = \sqrt{\frac{f_0 - f_*}{||\mathbf{L}||_1\Tcl}}$, then
\begin{enumerate}
\item For SignFed in the Random Update Attack, 
{\footnotesize
$$\frac{1}{\Tcl} \sum_{t=0}^{\Tcl-1} \mathbb{E}||g_t||_1 \leq \frac{2}{\sqrt{\Tcl}}\left(\frac{\sqrt{2}||\bm{\tau}||_1}{(1-\alpha)\sqrt{CN}} + \sqrt{||\mathbf{L} ||_1(f_0 - f^*)}\right)$$
}
where $\alpha$ denotes the fraction malicious clients and $|g_i|/\tau_i \leq 2/\sqrt{3}$ for all $1 \leq i \leq n$.
\item For DP-SignFed, 
{\footnotesize
\begin{align*}
\frac{1}{\Tcl} \sum_{t=0}^{\Tcl-1} \mathbb{E}||g_t||_1 \leq \frac{2}{\sqrt{\Tcl}}\left(\frac{||\bm{\tau}||_1}{\sqrt{CN}} + \frac{\sqrt{3n}\sigma||\bm{\tau}||_1}{CN} + \sqrt{||\mathbf{L} ||_1(f_0 - f^*)}\right)
\end{align*}
}
if $|g_i|/\tau_i \leq 2/\sqrt{3}$ for all $1 \leq i \leq n$.

\end{enumerate}
\end{theorem}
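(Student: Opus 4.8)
The plan is to follow the template of Theorem 2 in \cite{SIGNSGD_vote_robustness} and only modify the two ingredients that genuinely change in our setting: the presence of malicious random-sign clients in part (1), and the extra aggregate Gaussian noise in part (2). First I would write the $\mathbf{L}$-smoothness descent inequality along the SignFed trajectory. Since the server update is $\mathbf{w}_{t+1} - \mathbf{w}_t = \gamma V_t$ with $V_t = \mathsf{sign}(\sum_k \mathbf{s}_t^k) \in \{-1,+1\}^n$, Assumption 2 gives $f(\mathbf{w}_{t+1}) \leq f(\mathbf{w}_t) + \gamma g_t^\top V_t + \frac{\gamma^2}{2}\|\mathbf{L}\|_1$. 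Because each coordinate of $V_t$ equals $\pm 1$, one has $\mathbb{E}[g_t^\top V_t] = -\|g_t\|_1 + 2\sum_i |g_{t,i}| P_{t,i}$, where $P_{t,i} = \Pr[V_{t,i} \neq -\mathsf{sign}(g_{t,i})]$ is the probability that the aggregated vote points in the wrong (non-descent) direction at coordinate $i$. Substituting, rearranging for $\gamma\,\mathbb{E}\|g_t\|_1$, summing over $t=0,\dots,\Tcl-1$, dividing by $\gamma\Tcl$, and finally inserting $\gamma = \sqrt{(f_0-f^*)/(\|\mathbf{L}\|_1\Tcl)}$ collapses the optimization terms $\frac{f_0-f^*}{\gamma\Tcl} + \frac{\gamma}{2}\|\mathbf{L}\|_1$ into a $\frac{1}{\sqrt{\Tcl}}\sqrt{\|\mathbf{L}\|_1(f_0-f^*)}$ summand. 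All that remains is to bound the mis-vote term $\frac{2}{\Tcl}\sum_t\sum_i |g_{t,i}|P_{t,i}$ uniformly.

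The core of the argument, and the main obstacle, is bounding $\sum_i |g_{t,i}| P_{t,i}$, which I would do coordinate-wise. Fixing $i$, each honest client's signed estimate is a $\pm 1$ variable whose bias toward the correct descent sign is controlled by the single-client error bound of \cite{SIGNSGD_vote_robustness}: under the unimodal-symmetric noise assumption (Assumption 4), and using that $|\mathbb{B}| = \Tcl$ reduces the per-client variance to $\tau_i^2/\Tcl$, the low-SNR regime $|g_{t,i}|/\tau_i \leq 2/\sqrt3$ yields a per-client correct-vote advantage $\epsilon_{t,i} = \Theta(|g_{t,i}|\sqrt{\Tcl}/\tau_i)$. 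For SignFed under the Random Update Attack the $\alpha CN$ malicious clients contribute i.i.d.\ uniform $\pm 1$ signs with zero mean, so they add variance but no drift; the aggregated vote $\sum_k \mathbf{s}_{t,i}^k$ therefore has mean $\Theta((1-\alpha)CN\,\epsilon_{t,i})$ and variance at most $CN$. A sub-Gaussian (Hoeffding) tail bound then gives $P_{t,i} \leq \exp(-\tfrac12 (1-\alpha)^2 CN\,\epsilon_{t,i}^2)$, and maximizing $|g_{t,i}|P_{t,i}$ over $|g_{t,i}|$ (the exponent being quadratic in $|g_{t,i}|$) produces the clean per-coordinate bound $O\big(\tau_i/((1-\alpha)\sqrt{\Tcl\, CN})\big)$. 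Summing over $i$ and factoring out $1/\sqrt{\Tcl}$ yields the $\frac{2}{\sqrt{\Tcl}}\cdot\frac{\sqrt2\|\bm{\tau}\|_1}{(1-\alpha)\sqrt{CN}}$ term, completing part (1). The delicate point is treating the drifting honest votes and the zero-mean adversarial votes inside a single concentration inequality, and the optimization step that converts the exponential tail into the advertised $\tau_i/\sqrt{\cdot}$ scaling; this is where the factor $(1-\alpha)$ (rather than the $(1-2\alpha)$ of the sign-inversion analysis) enters.

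For DP-SignFed (part 2) there is no attack, but every honest client perturbs its sign with discrete Gaussian noise $\mathcal{DG}(\cdot,\sqrt n\sigma/\sqrt{|\mathbb{K}|})$. Summing over the $|\mathbb{K}| = CN$ clients, the aggregate noise has per-coordinate variance $CN\cdot(\sqrt n\sigma/\sqrt{CN})^2 = n\sigma^2$; by Lemma \ref{lem:cont_app} and Lemma \ref{lem:dgs_dist} this aggregate is, up to the $(1\pm\kappa)$ and $(1\pm\nu)$ factors already controlled in Theorem \ref{thm:fl_client}, a continuous Gaussian $\mathcal{G}(0,n\sigma^2)$, and it remains symmetric and unimodal so Assumption 4 is preserved (as noted immediately after the assumptions). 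The vote at coordinate $i$ is now wrong when this noise plus the honest drift carries the wrong sign, so the effective noise variance becomes $CN + n\sigma^2$ while the signal stays $\Theta(CN\,\epsilon_{t,i})$. Repeating the tail-bound-and-maximize step, and using $\sqrt{CN + n\sigma^2} \leq \sqrt{CN} + \sqrt n\,\sigma$, splits the per-coordinate bound into the two summands $\frac{\|\bm{\tau}\|_1}{\sqrt{CN}} + \frac{\sqrt{3n}\sigma\|\bm{\tau}\|_1}{CN}$, the first being the noise-free optimization error and the second the explicit ``cost of privacy.''

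The only additional thing I would need to check is that the discrete-to-continuous approximation does not disturb these leading-order terms; this follows from the concentration estimates already invoked in Theorem \ref{thm:fl_client}, since those corrections are multiplicative factors bounded away from $1$ uniformly in the regime $\sigma \geq \sqrt{|\mathbb{K}|\ln(2+2/\nu)/2n\pi^2}$. I anticipate the remaining work to be purely the bookkeeping of constants (the leading $\tfrac{2}{\sqrt{\Tcl}}$ and the $\sqrt2$ and $\sqrt3$ factors), which are inherited directly from the corresponding constants in \cite{SIGNSGD_vote_robustness} after substituting the effective signal and variance above.
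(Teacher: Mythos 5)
Your high-level architecture coincides with the paper's: both reduce the theorem to the descent framework of Theorem 2 in \cite{SIGNSGD_vote_robustness}, both isolate the per-coordinate mis-vote probability as the only quantity that changes, both model the random-update adversaries as zero-drift voters (which is exactly why $(1-\alpha)$ replaces the $(1-2\alpha)$ of sign inversion), and both treat the DP noise as an additional variance $n\sigma^2$ on top of the vote. The genuine divergence is the concentration step. The paper bounds $\Pr[Z_i \le M/2]$ (with $M=CN$ and $Z_i$ the number of correct sign bits) by Cantelli's one-sided Chebyshev inequality, $\Pr[Z_i \le M/2] \le \tfrac12\sqrt{\mathrm{Var}[Z_i]/(\mathrm{E}[Z_i]-M/2)^2}$, and then inserts the SNR bounds $1/|\omega| \le 2\sqrt{3}\,\tau_i/|g_i|$ and $1/(4\omega^2)-1 \le 4\tau_i^2/g_i^2$ from Lemma 1 of \cite{SIGNSGD_vote_robustness}; the resulting failure probability is proportional to $\tau_i/|g_i|$ (the hypothesis $|\mathbb{B}|=\Tcl$ later converts $\tau_i$ into $\tau_i/\sqrt{\Tcl}$), so the factor $|g_i|$ in the mis-vote term cancels identically and no optimization over $|g_i|$ is ever performed. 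You instead use a Hoeffding tail $\exp(-2(1-\alpha)^2 M \omega^2)$ and then maximize $|g|e^{-cg^2}$ over $|g|$. In the regime where the per-client advantage is genuinely linear in $|g_i|$, your route is sound, and for part (1) it even yields the constant $\sqrt{3/e}\approx 1.05 < \sqrt{2}$, i.e.\ a slightly stronger bound than claimed.

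Two concrete issues, however, keep your proposal from establishing the theorem exactly as stated. First, the constants of part (2) are not, as you anticipate, ``inherited'' from \cite{SIGNSGD_vote_robustness}: the coefficients $1$ and $\sqrt{3}$ in $\frac{\|\bm{\tau}\|_1}{\sqrt{CN}}+\frac{\sqrt{3n}\sigma\|\bm{\tau}\|_1}{CN}$ are artifacts of the Cantelli computation (which gives $\frac{\tau_i}{\sqrt{M}|g_i|}+\frac{\sqrt{3n}\sigma\tau_i}{M|g_i|}$ per coordinate), whereas Hoeffding-plus-maximization yields $\sqrt{3/e}\approx 1.05$ on the first term (and, depending on how you assign the sub-Gaussian proxies, up to $2\sqrt{3/e}$ on the second), so you obtain the same form but a strictly weaker inequality than the one claimed. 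Second, and more substantively, your maximization over $|g_{t,i}|$ silently assumes the advantage formula $\omega = \Theta(|g_{t,i}|\sqrt{\Tcl}/\tau_i)$ holds for all gradient magnitudes, while Gauss's inequality (Lemma 1 of \cite{SIGNSGD_vote_robustness}) gives this linear behaviour only when the batch-level SNR $|g_{t,i}|\sqrt{\Tcl}/\tau_i$ is below $2/\sqrt{3}$; the theorem's hypothesis controls only $|g_{t,i}|/\tau_i$, which is weaker by a factor $\sqrt{\Tcl}$. Above that threshold the advantage saturates (at $\omega \ge 1/3$), your exponential bound freezes at $e^{-\Theta((1-\alpha)^2 M)}$ independently of $|g_i|$, and multiplying by $|g_i|$ loses the $1/\sqrt{\Tcl}$ factor entirely: the resulting contribution exceeds the claimed $\tau_i/((1-\alpha)\sqrt{M\Tcl})$ whenever $\Tcl$ is exponentially large in $M$. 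The Cantelli route is immune to this because its numerator $\sqrt{q(1-q)}$ also shrinks in the high-SNR regime (there $q \le \tfrac{2}{9}\tau_i^2/(\Tcl g_i^2)$), so the bound $O(\tau_i/(\sqrt{M\Tcl}\,|g_i|))$ holds uniformly across both regimes. The minimal repair is to swap Hoeffding for Cantelli in your final step --- precisely what the paper does --- or to case-split on the SNR regime and use $q \le \tfrac{2}{9}S^{-2}$ directly when $S > 2/\sqrt{3}$.
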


\begin{proof}
 The primary focus of all the proofs is to bound the probability that a client computes the sign of a parameter update correctly.
Let $M = CN$.
As in \cite{SIGNSGD_vote_robustness}, let $Z_i \in [0, M]$ denote the number of correct sign bits received by the aggregator for parameter $i$, and $p$ denotes the probability that a honest client computes the correct bit. Let $\omega = p - \frac{1}{2}$.

\begin{enumerate}
    \item \textbf{Random Update Attack:} Notice that the probability that a sign of any parameter is correct at a malicious client is $1/2$, and each client acts independently from each other. Hence,
    $\E{Z_i} = (1- \alpha)Mp + \frac{1}{2}\alpha M$
    and $\Var{Z} = \frac{1}{4}\alpha M + (1-\alpha)M p (1-p)$. 
    The probability that a vote fails for the $i^{th}$ parameter is identical to $\Prob{Z_i \leq M/2}$, which, likewise in \cite{SIGNSGD_vote_robustness},  can be bounded as follows.
    \begin{align}
        \Prob{Z_i \leq M/2} &= \Prob{\E{Z_i} - Z_i \geq \E{Z_i} - M/2} \notag \\
        &\leq \frac{1}{1 + \frac{(\E{Z_i} - N/2)^2}{\Var{Z_i}}} \tag{by Cantelli's inequality}\\
        &\leq \frac{1}{2} \sqrt{ \frac{\Var{Z_i}}{(\E{Z_i} - M/2)^2} } \tag{by $1+x^2 \geq 2x$}\\
        &\leq \frac{1}{2\sqrt{M}} \sqrt{ \frac{\frac{1}{4}\alpha + (1-\alpha)p(1-p)}{(1-\alpha)^2(p-\frac{1}{2})^2} } \notag\\
        &\leq \frac{1}{2\sqrt{M}} \sqrt{ \frac{\frac{1}{4}\alpha }{(1-\alpha)^2(p-\frac{1}{2})^2} } \notag \\
        &+ \frac{1}{2\sqrt{M}} \sqrt{ \frac{ p(1-p)}{(1-\alpha)(p-\frac{1}{2})^2} } \notag \\
        &\leq \frac{\sqrt{\alpha}}{4\sqrt{M}(1-\alpha)|\omega|} 
        + \frac{1}{2\sqrt{M}} \sqrt{ \frac{\frac{1}{4}-\omega^2}{(1-\alpha)\omega^2} } \notag \\
        &\leq \frac{\sqrt{3\alpha}\tau_i}{2\sqrt{M}(1-\alpha)|g_i|} 
        + \frac{\tau_i}{\sqrt{M(1-\alpha)}|g_i|}  \label{eq:tmp1} \\
        &\leq \frac{\tau_i(\sqrt{\alpha} + \sqrt{1-\alpha})}{\sqrt{M}(1-\alpha)|g_i|} \notag \\
        &\leq \frac{\sqrt{2}\tau_i}{\sqrt{M}(1-\alpha)|g_i|} \notag
    \end{align}
    where, in Eq.~\eqref{eq:tmp1}, we used that $\frac{1}{4\omega^2} - 1 \leq 4\tau_i^2/g_i^2$ and $1/|\omega| \leq 2\sqrt{3}\tau_i/|g_i|$ for $|g_i|/\tau_i < 2/\sqrt{3}$ based on Lemma 1 in \cite{SIGNSGD_vote_robustness}.
    The rest of the derivation is identical to the proof of Theorem 2 in \cite{SIGNSGD_vote_robustness}.
    
    \item \textbf{DP-SignFed:} The Gaussian noise is added to the sum of signs. Let $Y_i$ denote the random variable describing the noise added by the clients to $Z_i$. 
    \begin{align}
        \Prob{Z_i + Y_i \leq M/2} &\leq \frac{1}{2} \sqrt{ \frac{\Var{Z_i + Y_i}}{(\E{Z_i + Y_i} - M/2)^2} } \notag \\
        &\leq \frac{1}{2} \sqrt{ \frac{\Var{Z_i} + \Var{Y_i}}{(\E{Z_i} - M/2)^2} } \tag{by independence and $\E{Y_i} = 0$} \\
        &\leq \frac{1}{2} \sqrt{ \frac{\Var{Z_i}}{(\E{Z_i} - M/2)^2} } \notag \\ &+ \frac{1}{2} \sqrt{ \frac{\Var{Y_i}}{(\E{Z_i} - M/2)^2} } \label{eq:p}
    \end{align}
Based on \cite{SIGNSGD_vote_robustness}, 
\begin{align}
     \frac{1}{2} \sqrt{ \frac{\Var{Z_i}}{(\E{Z_i} - M/2)^2} } &\leq \frac{1}{2} \sqrt{ \frac{M\left(\frac{1}{4}- \omega^2 \right)}{M^2\omega^2} } \notag \\
     &\leq \frac{1}{2} \sqrt{ \frac{M4\tau_i / |g_i|}{M^2\omega^2} } \notag \\
     &\leq \frac{\tau_i}{\sqrt{M}|g_i|} \label{eq:p1}
\end{align}

Moreover, if $|g_i|/\tau_i \leq 2/\sqrt{3}$, then $1/\omega^2 \leq 12 \tau_i^2 / g_i^2$, and hence
\begin{align}
     \frac{1}{2} \sqrt{ \frac{\Var{Y_i}}{(\E{Z_i} - M/2)^2} } &\leq \frac{1}{2} \sqrt{ \frac{n\sigma^2}{M^2\omega^2} } \notag \\
     &\leq \frac{1}{2} \sqrt{\frac{12 n \sigma^2 \tau_i^2 / g_i^2}{M^2}} \notag \\
     &\leq \frac{\sqrt{3}\sqrt{n}\sigma\tau_i}{M|g_i|} \label{eq:p2}
\end{align}
Plugging Eq.~\eqref{eq:p1} and \eqref{eq:p2} into Eq.~\eqref{eq:p}, we obtain that the probability that the noisy vote fails for the $i^{th}$ coordinate is bounded as
\begin{align*}
        \Prob{Z_i + Y_i \leq M/2} \leq   \frac{\tau_i}{\sqrt{M}|g_i|} +   \frac{\sqrt{3}\sqrt{n}\sigma\tau_i}{M|g_i|}
\end{align*}
if $|g_i|/\tau_i \leq 2/\sqrt{3}$. The claim follows from the proof of Theorem 2 in \cite{SIGNSGD_vote_robustness}.
\end{enumerate}
\end{proof}

\subsection{Model architectures}

For MNIST and Fashion-MNIST, we use a model \cite{FedAVG} with the following architecture: a convolutional neural network (CNN) with two 5x5 convolution layers (the first with 32 filters, the second with 64, each followed with 2x2 max pooling), a fully connected layer with 512 units and ReLu activation, and a final softmax output layer. This results in 1,663,370 parameters in total.

The LFW dataset is used with a CNN of three 3x3 convolution layers (32, 64, and 128 filters, each followed with 2x2 max pooling), a fully connected layer with 256 units and ReLU activation, and a final softmax output layer with 2 units. To test the property inference attack from \cite{Property_inference}, batch size is set to 32, and the SGD learning rate is 0.01. 

The model that we use for the CIFAR dataset is called "All-CNN-C" in \cite{CIFAR_model} \cite{CIFAR_model_medium}, which consists of a CNN of 3 blocks: 
the first block has three 3x3 convolutions layers with 96 filters (the last layer has a strides of 2x2 and dropout of 0.5 is applied), the ReLu activation is used per layer. The second block has the same configuration as the previous block, except the filter size which is 192 for each layer. The last block has one 3x3 convolutions layer with 192 filters, followed by two 1x1 convolution layers: the first with 192 filters (Relu activation) and the second with 10 filters. The last layer is connected with a global average pooling layer and uses  softmax activation. We use also the Adam optimizer with a learning rate of 0.001. This results in 1,369,738 parameters in total.

Finally, we use the following model for the IMDB dataset: one embedding layer with an output size of 50 (the vocabulary size is set to 5000 and the maximum length input to 400), followed by a convolution layer of one dimension with a kernel size of 5 and 250 filters; and a max pooling layer of size 3; followed by a LSTM layer with an output size set to 70 and an output layer with one unit that uses a sigmoid activation function. We use the Adam optimizer with a learning rate of 0.001. This results in 402,701 parameters in total.

\subsection{Selection of hyperparameters}
Strictly speaking, the selection of hyperparameters in DP-SignFed, such as batch size $|\mathbb{B}|$, scaling factor $\gamma$,  or sensitivity $S$,  must also be differentially private. One option is to use public data for this purpose which comes from the same distribution as the clients' private training data. The selection of hyperparameters can also be performed using more sophisticated methods like the one in Appendix D of \cite{Abadi}. 
\subsection{Robustness of DP-SignFed against client failures}
If any client fails to add its noise share to the model update for any reason, the aggregate will not have sufficient amount of noise to guarantee differential privacy. A straightforward countermeasure is to increase the variance of the added noise so that even if $r$ clients fail, the sum of $CN-r$ noise shares are still enough for differential privacy. In particular, each client $k$ sends $\mathsf{Enc}_{K_k}(\mathcal{DG}(\mathsf{sign}(\Delta{\mbf{w}}_t^k),  \sqrt{n}\sigma\mathbf{I}/\sqrt{CN-r}))$ to the server for aggregation.
Obviously,  if less than $r$ nodes fail, the aggregate will have larger noise than what is necessary for differential privacy.

\begin{table}[!h]
	\caption{Common environment and configuration of SignFed and StdFed. $\gamma=0.001$.}
	\label{tab:fl_sign_param}
	\centering
	\begin{tabular}{c|cccc}
		\hline
		\multirow{2}{*}{Datasets}         			                & MNIST  	                                                                    & \multirow{2}{*}{IMDB}                                                                          & \multirow{2}{*}{CIFAR} \\
		                                                            & Fashion-MNIST	                                                                &                                                                                                 &                       \\
		\hline \hline
		\multirow{8}{*}{Parameters}	    & $N=1000$; 	                    & $N=1000$;    & $N=1000$; \\
		
		                                 &           $C=0.1$; &  $C=0.1$; & $C=0.1$; \\
		                                 &          $|D_k|=60$; &   $|D_k|=25$; & $|D_k|=500$; \\
		                                 &          $|\mathbb{B}|=10$; & $|\mathbb{B}|=25$; & $|\mathbb{B}|=50$; \\
		                                 &          $\Tgd=30$;          &  $\Tgd=5$;         &  $\Tgd=50$; \\
		                                 &          $\Tcl=100$;        &  $\Tcl=100$;       &  $\Tcl=400$; \\
		                                 &          $SGD$              &  $ADAM$            &   $ADAM$     \\
		                                 &          $(\eta=0.215)$     &  $(\eta=0.001)$    &  $(\eta=0.001)$ \\

		\hline
	\end{tabular}
\end{table}

\begin{table}[!h]
	\caption{Common environment of the privacy part.  $\gamma=0.005$ and $\Tcl=200$. }
	\label{tab:fl_sign_param_privacy}
	\centering
	\begin{tabular}{c|c}
		\hline
		\backslashbox{Algorithms}{Datasets}         			                & MNIST	  \& Fashion-MNIST \\
		\hline \hline
		\multirow{5}{*}{DP-SignFed \& DP-StdFed}	        	                                    & $N=6000$;  $C=1/60$;   		\\
		                                                                            & $q=1/60$;  $|D_k|=10$; \\
		                                                                            & $|\mathbb{B}|=10$;  $\Tgd=5$;  \\
		                                                                            & $SGD(\eta=0.215)$\\ & $S=1$ (DP-StdFed); \\
		                                                                            & $n=1,663,370$;\\
		             & $\delta=10^{-5}$ \\
		\hline
	\end{tabular}
\end{table}

\begin{table}[!h]
	\caption{Parameter Configuration for the Backdoor Attacks. SignFed is used with the vote aggregation $\gamma=0.001$. }
	\label{tab:fl_sign_param_backdoor}
	\centering
	\begin{tabular}{c|c}
		\hline
		\backslashbox{Attacks}{Datasets}         			                & MNIST	\\	                                
		\hline \hline
	
		\multirow{4}{*}{In-backdoor}  	                                                & $N=10$;  $C=1$;   	 \\
		                                                                                & $|D_k|=6000$;  $|\mathbb{B}|=10$; \\
		                                                                               & $\Tgd=3000$;  $\Tcl=40$;                \\
		                                                                               & $SGD(\eta=0.215)$ \\
		\hline
		\multirow{4}{*}{Out-backdoor}    	                                            & $N=10$;  $C=1$;  	 \\
		                                                                                & $|D_k|=6000$;   $|\mathbb{B}|=10$; \\
		                                                                               & $\Tgd=3000$;  $\Tcl=100$;               \\
		                                                                               & $SGD(\eta=0.215)$ \\
		\hline
		\backslashbox{Attacks}{Datasets}         			                & CIFAR \\
		\hline \hline
		\multirow{4}{*}{In-backdoor}  	                                                &   $N=10$;  $C=1$;   \\
		                                                                                & $|D_k|=50000$;   $|\mathbb{B}|=100$; \\
		                                                                                &   $\Tgd=1000$; $\Tcl=100$;  \\
		                                                                                & $ADAM(\eta=0.001)$ \\
		 \hline                                                                               
        \multirow{4}{*}{Out-backdoor}                                                   &   $N=10$;  $C=1$;  \\
                                                                                        & $|D_k|=50000$; $|\mathbb{B}|=100$; \\
                                                                                        &  $\Tgd=1000$;  $\Tcl=300$; \\
                                                                                        &  $ADAM(\eta=0.001)$ \\
				
	\end{tabular}
\end{table}

\begin{algorithm}[h]
\small
		\caption{DP-StdFed: Federated Learning with Client Privacy \label{alg:DP-StdFed}}
	\DontPrintSemicolon
	{\bf Server:}\;
	\Indp Initialize common model $w_0$\;
	\For {$t=1$ \KwTo $\Tcl$}
	{
	    Select $\mathbb{K}$ clients randomly \;
		\For {each client $k$ \textrm{in} $\mathbb{K}$}
		{	
			$\Delta \tilde{\mathbf{w}}_t^k = \mathbf{Client}_k(\mbf{w}_{t-1})$\;
		}
		$\mbf{w}_{t} = \mbf{w}_{t-1} + \frac{1}{|\mathbb{K}|} \sum_{k} \Delta \tilde{\mathbf{w}}_t^k $\;
	}
    \Indm {\bf $\mathbf{Client}_{k}(\mbf{w})$:}\;
    \Indp
	$\mbf{w}_{t-1}^k = \mbf{w}$\;
	$\Delta \mbf{w}_t^k = \mathbf{SGD}(D_k, \mbf{w}_t^{k-1}, \Tgd) - \mbf{w}_{t-1}^k$\;
	$\Delta \hat{\mbf{w}}_t^k = \Delta \mbf{w}_t^k / \max\left(1, \frac{||\Delta \mbf{w}_t^k||_2}{S}\right)$\;
    \KwOut{$\mathsf{Enc}_{K_k}(\mathcal{G}(\Delta \hat{\mbf{w}}_t^k, S \mathbf{I}\sigma /\sqrt{|K|}))$}
\end{algorithm}


\subsection{Computational Environment}

Our experiments were performed on a server running Ubuntu 18.04 LTS equipped with a Intel(R) Xeon(R) Silver 4114 CPU @ 2.20GHz, 192GB RAM, and two NVIDIA Quadro P5000 GPU card of 16 Go each. We use Keras 2.2.0 \cite{KERAS} with a TensorFlow  backend 1.12.0 \cite{TensorFlow} and Numpy 1.14.3 \cite{Numpy} to implement our models and experiments. We use  Python 3.6.5 and our code runs on a Docker container to simplify the reproducibility. 

\subsection{Datasets}
\label{sec:datasets}

The following datasets were used:
\begin{itemize}
    \item The MNIST database of handwritten digits. It consists of 28 x 28 grayscale images of digit items and has 10 output classes. The training set contains 60,000 data samples while the test/validation set has 10,000 samples \cite{MNIST} \cite{KERAS_datasets}.
    \item The CIFAR-10 dataset consists of 60000 32x32 colour images in 10 classes, with 6000 images per class. There are 50000 training images and 10000 test images. We augment the dataset to 500,000 training images by randomly shifting the original images horizontally and vertically and by randomly flipping the original images horizontally \cite{CIFAR} \cite{KERAS_datasets}.
    \item Fashion-MNIST database of fashion articles consists of 60,000 28x28 grayscale images of 10 fashion categories, along with a test set of 10,000 images \cite{Fashion-MNIST} \cite{KERAS_datasets}. 
    \item IMDB Movie reviews sentiment classification dataset of 25,000 movies reviews, labeled by sentiment (positive/negative) \cite{KERAS_datasets}. The test set contains also 25,000 movies reviews.
    \item Labeled Faces in the Wild (LFW) dataset: consists of 13,000 $62 \cdot 47$ RGB images of faces collected from the web \cite{LFW}.
\end{itemize}

\end{document}